\newtheorem{theorem}{Theorem}
\newtheorem{lemma}{Lemma}
\newtheorem{proposition}{Proposition}
\newtheorem{remark}{Remark}
\newcommand{\StatexIndent}[1][3]{%
  \setlength\@tempdima{\algorithmicindent}%
  \Statex\hskip\dimexpr#1\@tempdima\relax}
\title{Smart Meter Privacy with Renewable Energy and an Energy Storage Device}
\author{Giulio Giaconi, \IEEEmembership{Student Member, IEEE}, Deniz G\"{u}nd\"{u}z, \IEEEmembership{Senior Member, IEEE}, and H. Vincent Poor, \IEEEmembership{Fellow, IEEE}
\thanks{The work of G. Giaconi was supported by the Engineering and Physical Sciences Research Council (EPSRC) of the U.K. under Grant 1507704. This work was supported in part by the EPSRC through the project COPES under Grant 173605884, in part by the European Research Council under Starting Grant BEACON (agreement 677854), and in part by the U.S. National Science Foundation under Grant CMMI-1435778, Grant ECCS-1549881, and Grant ECCS-1647198. This paper was presented in part at the IEEE International Conference on Communications, London, U.K., June 2015 \cite{Giaconi:2015}.

G. Giaconi and D. G\"{u}nd\"{u}z are with the Department of Electrical and Electronic Engineering, Imperial College London, London,  SW7 2AZ, UK (e-mail: \{g.giaconi, d.gunduz\}@imperial.ac.uk).

H. V. Poor is with the Department of Electrical Engineering, Princeton University, Princeton, NJ 08544 USA (e-mail: poor@princeton.edu).

}}
\begin{document}
\maketitle

\begin{abstract}
A \textit{smart meter} (SM) measures a consumer's electricity consumption and reports it automatically to a utility provider (UP) in almost real time. Despite many advantages of SMs, their use also leads to serious concerns about consumer privacy. In this paper, SM privacy is studied by considering the presence of a renewable energy source (RES) and a rechargeable battery (RB), which can be used to partially hide the consumer's energy consumption behavior. Privacy is measured by the \textit{information leakage rate}, which denotes the average mutual information between the user's real energy consumption and the energy requested from the grid, which the SM reads and reports to the UP. The impact of the knowledge of the amount of energy generated by the RES at the UP is also considered. The minimum information leakage rate is characterized as a computable information theoretic single-letter expression in the two extreme cases, that is, when the battery capacity is infinite or zero. Numerical results are presented for the finite battery capacity case to illustrate the potential privacy gains from the existence of an RB. It is shown that, while the information leakage rate decreases with increasing availability of an RES, larger storage capacity is needed to fully exploit the available energy to improve the privacy.
\end{abstract}

\section{Introduction}

The transition from the legacy power distribution network to the new power grid paradigm, the so-called \emph{smart grid} (SG), is rapidly ongoing. An SG provides many advantages for energy generation, transmission, distribution and consumption thanks to the use of information and communication technologies that enable SGs to monitor and control the power network more effectively \cite{Mo:2012}. In addition, an SG eases the integration of renewable energy sources (RESs), which is a fundamental factor in reducing our dependence on fossil fuels and moving on to a low carbon economy. A key feature of an SG is the advanced metering infrastructure, and in particular smart meters (SMs), which record and report the electricity consumption of a household. SMs that are currently being rolled out in the United Kingdom send measurements every $30$ minutes \cite{smartEnergyGB}, whereas those in Texas send every $15$ minutes \cite{SMT}. The frequency of SM measurements is expected to increase drastically in the near future when renewable energy integration increases and the energy market becomes more efficient by incorporating time-of-usage pricing and demand shifting \cite{Segovia:2011}.

The installation of SMs is rapidly advancing worldwide. For example, all European Union countries are required to have 80\% SM adoption by 2020 and 100\% by 2022 \cite{EurDir:2009}. On the other hand, the information that is collected by SMs may be potentially used for other purposes, thereby raising the question of data privacy. By using nonintrusive appliance load monitoring (NILM) techniques, power consumption load profiles can reveal sensitive information, such as the users' habits, presence at home and working hours, potential illnesses or disabilities, equipment being used, and even which TV channel is being watched \cite{Rouf:2012}. First NILM devices were built in the 80s and were already able to detect the activity of some appliances by knowing their power signature \cite{Hart:1992}. Molina-Markham \textit{et al.} \cite{Molina:2010} showed that it is possible to detect users' activity by simply using off-the-shelf clustering and pattern recognition methods, even without any a priori knowledge of the appliances' power signature. The current state of the art is to consider a factorial hidden Markov model to model the total consumption of various household appliances, whose solution is, however, NP hard. To solve this issue, \cite{Shaloudegi:2016} describes a computationally efficient method based on a semidefinite relaxation combined with randomized rounding.

\subsection{Privacy-Aware SM Techniques}

To date, there are two main families of approaches that have been investigated to provide privacy to consumers. The first family includes approaches that process SM data before sending it to the UP, while approaches in the second family aim at modifying the actual user energy demand. Considered within the first family are methods such as \emph{data obfuscation}, \emph{data aggregation} and \emph{data anonymization}. Data obfuscation, i.e., the perturbation of metering data by adding noise, is a classic method, and has been adapted to SGs in \cite{Kim:2011} and \cite{Bohli:2010}. Among these methods, differential privacy \cite{Dwork:2006}, a well-established concept in the data mining literature based on distorting data to protect the privacy of individuals, is applied to SMs in \cite{Backes:2014}. Along these lines, authors in \cite{Sankar:2013TSG} provide a framework that measures the trade-off between altering data (privacy) and sharing them (utility). Data aggregation, proposed in \cite{Bohli:2010}, \cite{Garcia:2010} and \cite{Li:2011}, considers aggregating power measurements over a group of households so that the UP is prevented from knowing individual consumptions. The aggregation can be performed with or without the help of a trusted third party. Data anonymization mainly considers resorting to pseudonyms rather than the real identities, as in \cite{Petrlic:2010} and \cite{Efthymiou:2010SGC}.

The first family of approaches, however, suffer from a further privacy risk. In fact, the energy consumed by a user is provided directly from the grid, which is fully controlled by the distribution system operator (DSO), i.e., the entity that manages the power grid; and hence, the DSO can embed additional sensors to monitor the energy requested by a household or a business, without fully relying on SM readings. Moreover, any attacker, e.g., a thief or an intelligence agency, may decide to install a sensor for directly monitoring a specific household or business. Another disadvantage of data obfuscation methods is the mismatch between the reported values and the real energy consumption. This prevents the DSO from accurately monitoring the grid states and rapidly reacting to outages, energy theft or other problems. To address these problems, the second family of privacy-preserving approaches directly modifies the actual energy consumption profile of the user, called the \textit{input load} rather than simply modifying the data sent to the UP. This can be done, for example, by filtering the energy via an energy storage device, i.e., a rechargeable battery (RB), as in \cite{Kalogridis:2010SGC,Kalogridis:2011,Yang:2012,Varodayan:2011,Tan:2013JSAC,Tan:2017JIFS,Li:2016Arxiv}, or by using an RES, as originally proposed in \cite{Tan:2013JSAC}. If we denote the energy received from the grid as the \textit{output load}, the idea is to physically differentiate the output load with respect to the input load. Different heuristic algorithms have been proposed, such as the best-effort water-filling algorithm in \cite{Kalogridis:2011} that aims at keeping the output load at its most recent value, or the stepping algorithm in \cite{Yang:2012} that quantizes the power demand into a step function. In \cite{Tan:2017JIFS} the problem is solved in the offline setting by taking the energy cost into account, while the online privacy problem is formulated as a Markov decision process in \cite{Li:2016Arxiv}, and solved numerically in general, while a ``single-letter'' expression is provided for an independent and identically distributed (i.i.d.) input load. In \cite{Farokhi:2017TSM} Fisher information is used as a measure of privacy and, by using the Cram\'{e}r-Rao bound, the variance of the estimation error of any unbiased estimator of the household consumption is maximized by minimizing the trace of the Fisher information matrix. When considering also the presence of an RES, a single-letter solution is given for this problem in \cite{Gunduz:2013ICC,Gomez:2013ISIT,Gomez:2015TIFS} under average and peak power constraints on the available RES. In \cite{Chin:2016TSG} model predictive control is adopted to jointly optimize cost and and privacy in the presence of a battery and local energy generation.

In this paper, we adopt the latter approach, and focus on providing privacy by considering the presence of both an RES and an RB. We study privacy from an information theoretic point of view, and, for some scenarios, provide closed-form expressions for the best privacy performance achievable. A similar model, studied in \cite{Gomez:2015TIFS}, imposes only average and peak power constraints on the RES, which can be a microgrid, capable of providing any amount of energy at each time instant. However, the energy produced by an RES at each time instant is typically random, and its statistics depend on the energy source (e.g., solar, wind) and the energy generator specifications. In addition, the finite-capacity battery imposes further limitations on the available energy. Thus, in this paper we study the minimum amount of user's energy consumption information leaked to the UP by taking into account instantaneous power constraints, as initially proposed in \cite{Giaconi:2015}. While the analysis in \cite{Giaconi:2015} is limited to the two extreme scenarios of zero and infinite battery capacity with a discrete-alphabet input load, here we also study the more practical scenario with a finite-capacity storage device, as well as a continuous-alphabet input load.

Following up on \cite{Varodayan:2011}, \cite{Tan:2013JSAC} and \cite{Gomez:2015TIFS}, we model user's energy consumption profile as a randomly generated time series whose statistics are known by the UP, and measure the user's information leakage by the average mutual information between the input and output load vectors, i.e., between the real energy consumption profile of the appliances and the SM readings, which is called the \textit{information leakage rate}. Mutual information between random variables $X$ and $Y$, $I(X;Y)$, is as a measure of dependence between $X$ and $Y$, which is equal to zero if and only if $X$ and $Y$ are independent. We can also interpret mutual information as the reduction in the uncertainty of the UP about the real energy consumption of the appliances, $X^n$, after receiving the SM measurements, $Y^n$. Thus, minimizing mutual information can be interpreted as a way of improving privacy for SM users. Moreover, mutual information as a privacy measure does not depend on the technological implementation of load monitoring algorithms, and therefore, provides statistical privacy guarantees independent of the computational power of the attacker or the particular monitoring algorithm employed. Mutual information as a measure of privacy leakage has also been considered in other domains, see for example \cite{Chatzikokolakis:2008, Kopf:2007, Clark:2002}.

\subsection{Current Home Batteries and Typical Household Input Loads}


\begin{table*}[!t]
\caption{Specifications of some currently available residential batteries.}
\centering
\begin{tabular}{ |c|c|c|c| }
\hline
\textbf{Residential Battery} & \textbf{Capacity (kWh)} & \begin{tabular}[x]{@{}c@{}} \textbf{RB Charging} \\ \textbf{Peak Power (kW)} \end{tabular} & \begin{tabular}[x]{@{}c@{}} \textbf{RB Discharging} \\ \textbf{Peak Power (kW)} \end{tabular}\\
\hline
Sunverge SIS-6848 \cite{Sunverge} & $7.7$, $11.6$, $15.5$, $19.4$ & $6.4$ & $6$\\
\hline
SonnenBatterie eco \cite{sonnen} & $4-16$ & $3-8$ & $3-8$\\
\hline
Tesla Powerwall \cite{tesla} & $13.5$ & $5$ & $5$\\
\hline
LG RESU 48V \cite{LGresu} & $2.9$, $5.9$, $8.8$ & $3$, $4.2$, $5$ &  $3$,  $4.2$, $5$\\
\hline
Panasonic Battery System LJ-SK84A \cite{panasonic} & $8$ & $2$ & $2$\\
\hline
Powervault G200-LI-2/4/6KWH \cite{powervault} & $2$, $4$, $6$ & $0.8$, $1.2$ & $0.7$, $1.4$\\
\hline
Orison  Panel \cite{orison} & $2.2$ & $1.8$ & $1.8$\\
\hline
Simpliphi PHI 3.4 - 48V \cite{simpliphi} & $3.4$ & $1.5$ & $1.5$\\
\hline
\end{tabular}
\label{tab:batteryCapacity}
\end{table*}

In this section we briefly summarize the specifications of residential batteries available in the market and the general statistics of household energy consumption and generation to illustrate the feasibility of privacy-protection through energy management. Table \ref{tab:batteryCapacity} lists the storage capacity and peak power for some of the currently available batteries for residential use. It is noteworthy that the capacities are in the range of few kWh. A typical household's average energy consumption also lies within the same range, as shown in Table \ref{tab:SManalitics}, where we report the distribution of the average user power consumption over different years obtained from various databases, with different time resolutions. From the Dataport database \cite{pecanstreet}  we observe that, independently from the period considered, the average user demand is less than $2$ kWh for $80-90\%$ of the time. Current batteries charged at full capacity would then be able to satisfy the demand for a few hours only.

\begin{table*}[!t]
\caption{Distribution of average household power consumption (resolution refers to the measurement frequency). Values in each column indicate the percentage of time the average consumption falls into the corresponding interval.}
\centering
\resizebox{\textwidth}{!}{
\begin{tabular}{ |c|c|c|c|c|c|c|c|c|c|c| }
\hline
\textbf{Source} & \textbf{Location} & \textbf{Resolution} & \textbf{Time Frame} & \textbf{\# of Houses} & $\mathbf{[0,0.5]}$ \textbf{kW} & $\mathbf{(0.5, 1]}$ \textbf{kW} &  $\mathbf{(1, 2]}$ \textbf{kW} & $\mathbf{(2,3]}$ \textbf{kW} & $\mathbf{(3, 4]}$ \textbf{kW} & $\mathbf{(4,+\infty)}$ \textbf{kW}\\
\hline
\multirow{5}{*}{\cite{pecanstreet}}  & \multirow{5}{*}{Texas}  & \multirow{5}{*}{$60$ mins} & 01/01/2016 - 31/05/2016    & $512$ & $38$ & $30$   & $20$  & $7$   & $3$ & $2$\\
\cline{4-11}
& &  & 01/01/2015 - 31/12/2015    & $703$ & $36$ & $26$   & $20$  & $9$   & $5$ & $4$\\
\cline{4-11}
& & & 01/01/2014 - 31/12/2014    & $720$ & $39$ & $25$   & $20$  & $8$   & $4$ & $4$\\
\cline{4-11}
& & &01/01/2013 - 31/12/2013    & $419$ & $35$ & $25$   & $21$  & $9$   & $5$ & $5$\\
\cline{4-11}
& & &01/01/2012 - 31/12/2012    & $182$ & $31$ & $26$   & $24$  & $10$  & $5$ & $5$\\
\hline
\cite{intertek} & UK & $2$ mins & 01/05/2010 - 31/07/2011       & $251$ & $18$ & $24$   & $47$  & $11$  & $0$ & $0$\\
\hline
\cite{dred} & Netherlands & $1$ sec &05/07/2015 - 05/12/2015           & $1$   & $98$ & $1.8$  & $0.4$ & $0$   & $0$ & $0$\\
\hline
\cite{Lichman:2013} & France & $1$ min & 16/12/2006 - 26/11/2010   & $1$   & $47$ & $9$    & $28$  & $8$   & $4$ & $2$\\
\hline
\end{tabular}
}
\label{tab:SManalitics}
\end{table*}

\begin{table*}[!t]
\caption{Distribution of average power generated by residential photovoltaic systems. Values in each column indicate the percentage of time the average generation falls into the corresponding interval.}
\centering
\resizebox{\textwidth}{!}{
\begin{tabular}{ |c|c|c|c|c|c|c|c|c|c|c|c|c|c|c| }
\hline
\textbf{Source} & \textbf{Location}  & \textbf{Resolution}  & \textbf{Time Frame} & \textbf{\# of Houses} & $\mathbf{0}$ \textbf{kW}& $\mathbf{(0,0.5]}$ \textbf{kW} &  $\mathbf{(0.5,1]}$ \textbf{kW}& $\mathbf{(1,2]}$ \textbf{kW}& $\mathbf{(2,3]}$ \textbf{kW}& $\mathbf{(3,4]}$ \textbf{kW}& $\mathbf{(4,+\infty)}$ \textbf{kW}\\
\hline
\cite{pecanstreet} & Texas & $60$ min &  01/01/2012 - 31/05/2016 & $351$ & $49$ & $17$ & $7$ & $9$ & $7$ & $6$ & $5$\\
\hline
\cite{microgen} & UK & $30$ min &  01/01/2015 - 31/12/2015 & $100$ & $51.7$ & $36.4$ & $9.8$ & $2$ & $0.1$ & $0$ & $0$\\
\hline
\end{tabular}
}
\label{tab:photovoltaic}
\end{table*}

\begin{table*}[!t]
\caption{Specifications of the solar panels studied in \cite{microgen}. The values in each column indicate the percentage of solar panels that satisfy the corresponding property.}
\centering

\begin{tabular}{ |c|c|c|c|c||c|c||c|c|c|c| }
\hline
\multicolumn{5}{|c||}{\textbf{Solar Panel Area ($m^2$)}} & \multicolumn{2}{c||}{\textbf{Solar Panel Cell Type}} & \multicolumn{4}{c|}{\textbf{Nominal Installed Capacity (kWp)}}\\
\hline
$(0,15]$ &  $(15,20]$ & $(20,25]$ & $(25,30]$ & $(30,+\infty)$ & Monocrystalline & Polycrystalline & $(0,2]$ &  $(2,3]$ & $(3,4]$ & $(4,\infty)$ \\
\hline
$5$ & $35$ & $44$ & $15$ & $1$ & $93$ & $7$ & $4$ & $36$ & $59$ & $1$\\
\hline
\end{tabular}

\label{tab:photovoltaicSpecifications}
\end{table*}

In Table \ref{tab:photovoltaic}  we have also included information about the amount of average power generated via a rooftop solar panel. Locations, technology as well as inclinations and sizes of panels vary, as shown in Table \ref{tab:photovoltaicSpecifications}  for one of the databases considered, where kWp denotes the kilowatt peak, i.e., the output power achieved by a panel under full solar radiation. As expected, around $50\%$ of time, i.e., at night, no energy is generated at all, while there are differences in the distribution of the average values for the two databases considered, due to the different areas considered. If we compare these values with those in Table \ref{tab:batteryCapacity}, we can see that the capacities of current batteries are sufficient to store many hours of average solar energy generated by the solar panels most of the time, for which the infinite battery assumption may be an accurate model.

\subsection{Main Contributions}
The main contributions of this paper can be summarized as follows:
\begin{enumerate}
  \item We provide computable closed-form single-letter expressions for the minimum information leakage rate when the battery capacity is zero and infinite. We provide detailed proofs for these results, which have been stated in \cite{Giaconi:2015} without proofs. These two asymptotic performance results can also be considered as upper and lower bounds on the achievable privacy performance for a more practical SM system with a finite-capacity battery.
  \item For these scenarios, we study the information leakage rate also considering the availability of the RES information at the UP, which provides additional side information to the UP.
  \item For a finite-capacity battery scenario, we propose a suboptimal parameterized energy management policy, and optimize the policy parameters using a policy search technique that exploits stochastic gradient descent. We show numerically that the performance of the proposed energy management policy approaches the one with an infinite battery even with a relatively small battery size. This shows the efficacy of the proposed privacy preservation scheme.
  \item We show that the information leakage rate decreases with the rate of the available RES, and that a larger RB is needed to fully exploit the available energy to improve the privacy.
\end{enumerate}

The remainder of the paper is organized as follows. In Section \ref{sec:SystemModel} the system model is introduced. In Section \ref{sec:infinite} an ideal system with an infinite-capacity battery is studied, while in Section \ref{sec:zero} another extreme case with no energy storage is considered. For both scenarios, we also study the case in which the UP knows the realizations of the renewable energy process. In Section \ref{sec:binary} we study the binary scenario, while in Section \ref{sec:finite} we propose achievable schemes for the generic finite battery capacity scenario, and present the corresponding numerical results. In Section \ref{sec:continuous} a continuous input load is considered, while conclusions are drawn in Section \ref{sec:conclusion}.


\subsection{Notation}
Random variables (RVs) are denoted by capital letters $X, Y$, their realizations by lower-case letters $x, y$, and the corresponding alphabets by calligraphic letters $\mathcal{X}, \mathcal{Y}$. The probability distribution of a RV $X$ taking values in $\mathcal{X}$ is denoted by $p_{X}$. For integers $0 < a < b$, $X_{a}^{b}$ denotes the sequence $(X_a,X_{a+1},\ldots,X_b)$, while $X^b \triangleq X_{1}^b$. All logarithms and exponentials are in base $2$, unless specified otherwise.


\section{System Model}\label{sec:SystemModel}

\begin{figure}[!t]
\centering
\includegraphics[width=\columnwidth]{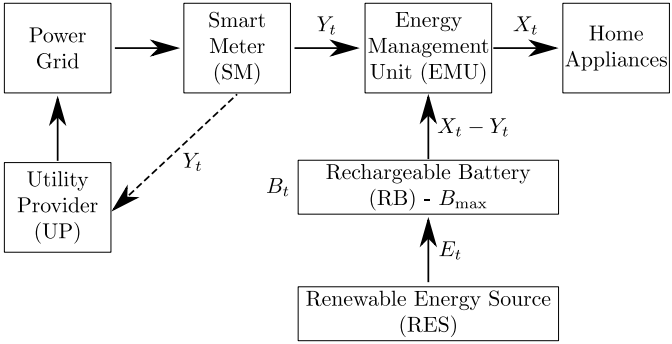}
\caption{System model. $X_t$, $Y_t$, $E_t$ and $B_t$ denote the consumer's energy demand, the SM readings, the energy produced by the RES, and the state of the RB at time $t$, respectively. The dashed line represents the meter readings being reported to the UP.}
\label{fig:SystemModel}
\end{figure}

A discrete time system model is adopted as depicted in Figure \ref{fig:SystemModel}. $X_t \in \mathcal{X}$ is the total amount of power demanded by a user in time slot $t$, where $\mathcal{X}=[0,\ldots,X_{\max}]$, while $Y_t \in \mathcal{Y}$ is the energy received from the UP at time $t$, where $\mathcal{Y}=[0,\ldots,Y_{\max}]$. We call $X_t$ as the \textit{input load} and $Y_t$ as the \textit{output load} to simplify the terminology. For simplicity, we assume that the entries of the input load sequence $\{X_t\}_{t=1}^{\infty}$ are i.i.d. with distribution $p_X$. In time slot $t$, $E_t\in \mathcal{E}$ units of energy are generated from the RES, which becomes available to the energy management unit (EMU) at the beginning of time slot $t$. The entries of the renewable energy sequence $\{E_t\}_{t=1}^{\infty}$ are also i.i.d. with distribution $p_E$ and alphabet $\mathcal{E}=[0,\ldots,E_{\max}]$, while the average renewable energy rate is denoted by $\bar{P}_E \triangleq \mathbbm{E}[E]$. We further consider the presence of an RB in which the renewable energy can be stored for future use. The state of charge (SOC) of the battery at time $t$ is $B_t \in [0,\ldots,B_{\max}]$, and its capacity is $B_{\max}$. We assume no losses in the battery charging and discharging processes.

The EMU always satisfies user's energy demands by drawing energy from either the UP or the RB; that is, outages or demand shifting are not allowed. As a consequence, we have $X_{\max} \geq Y_{\max} \geq X_{\max}-B_{\max}$. We do not allow extra energy to be drawn from the grid and then wasted. This could provide additional privacy, albeit at a significantly higher energy cost. Also, the battery is exclusively for storing the generated renewable energy, and it cannot be recharged with grid energy. While storing grid energy in the battery to be supplied later to the appliances can provide additional privacy \cite{Varodayan:2011}, here we limit the use of the battery to renewable energy storage to isolate and understand the privacy benefits of RESs. Hence, we impose
\begin{equation}\label{eq:constraintY}
0 \leq Y_t \leq X_t, \quad \forall t,
\end{equation}
while $X_t - Y_t$ is the amount of energy obtained from the RB in time slot $t$. The energy retrieved from the battery must be smaller than the energy available in it, i.e.,
\begin{equation}\label{eq:constraintXY}
X_t-Y_t \leq B_t + E_t, \quad \forall t.
\end{equation}

We also consider a peak power constraint $\hat{P}$ on the amount of energy that can be requested at any time from the RB, i.e.,
\begin{equation}\label{eq:constraintPeak}
0 \leq X_t - Y_t \leq \hat{P}, \quad \forall t,
\end{equation}
and for the rest of the paper we assume that $\bar{P}_E \leq \hat{P}$.

Given $(X_t, E_t, B_t)=(x_t, e_t, b_t)$ and the constraints (\ref{eq:constraintY}), (\ref{eq:constraintXY}), and (\ref{eq:constraintPeak}), the set of feasible energy requests at time $t$ is
\begin{equation}\label{eq:feasibleSetY}
\bar{\mathcal{Y}}(x_t,e_t,b_t) \triangleq  \\ \Big\{ y_t \in \mathcal{Y}: [x_t-\min\{b_t+e_t,\hat{P}\}]^+ \leq y_t \leq x_t \Big\},
\end{equation}
where $[a]^+=a$ if $a>0$, and $0$ otherwise.

The battery update equation can be written as
\begin{align}\label{battery_constraint}
	B_{t+1} = \min \Big\{B_{t} + E_t - (X_t - Y_t), B_{\max} \Big\}, \quad \forall t.
\end{align}

We aim at designing \emph{energy management policies} $f=(f_1,f_2,\ldots)$ that decide on the amount of energy to request from the UP at each time $t$, given the previous values of input load $X^t$, renewable energy $E^{t}$, battery SOCs $B^t$, and output load $Y^{t-1}$, i.e.,
\begin{equation*}
f_t: \mathcal{X}^t \times \mathcal{E}^t \times \mathcal{B}^t \times \mathcal{Y}^{t-1} \rightarrow \mathcal{Y}, \quad \forall t,
\end{equation*}
while satisfying (\ref{eq:feasibleSetY}) and (\ref{battery_constraint}), where $f \in \mathcal{F}$ and $\mathcal{F}$ denotes the set of feasible policies, i.e., which produce output load values that satisfy the RB and RES constraints at any time, as well as the battery update equation.

We measure privacy via the \textit{information leakage rate}, defined as the average mutual information rate between the actual user energy consumption and the energy received from the grid, which also corresponds to the reported SM data, i.e.,
\begin{equation}\label{eq:generalForm}
\mathcal{I}_f^{i}(B_{\max},\hat{P}) \triangleq \lim_{n \rightarrow \infty} \frac{1}{n} I \left( X^n; Y^n \right),
\end{equation}
where the subscript $f$ denotes the specific energy management policy employed, and the superscript $i$ stresses the fact that we are considering instantaneous power constraints. Thus, the optimization problem can be written as the minimization of (\ref{eq:generalForm}) over all feasible policies $f \in \mathcal{F}$, i.e.,
\begin{equation}\label{eq:generalMinimization}
\mathcal{I}^{i}(B_{\max},\hat{P}) \triangleq \inf_{f \in \mathcal{F}} \lim_{n \rightarrow \infty} \frac{1}{n} I(X^n;Y^n).
\end{equation}

A single-letter expression for the information leakage rate is provided in \cite{Gunduz:2013ICC,Gomez:2013ISIT,Gomez:2015TIFS} when the EMU is constrained only by the average and peak power constraints. In general, because of the memory effects introduced by the RB and the RES, satisfying the input load from the RB or the RES at some time period may come at the expense of revealing more information about the energy consumption at future time periods. For this reason, the information theoretic analysis typically focuses on the average performance, measured over a period of $n$ time slots, and aims at understanding the fundamental performance bounds by letting this time period go to infinity, i.e., $n \rightarrow \infty$, as in (\ref{eq:generalForm}). However, the definition of the information leakage rate in (\ref{eq:generalForm}) involves $n$-length sequences $X^n$ and $Y^n$, and the asymptotic performance limit corresponds to an infinite-dimensional optimization problem, which cannot be solved numerically. On the contrary, characterizing a single-letter expression allows the optimal solution to be to described as an optimization problem in terms of the single-letter random variables, which can be a finite-dimensional optimization problem when the involved random variables are defined over finite alphabets. Therefore, a single-letter characterization of the information theoretic privacy is desirable to be able to evaluate the minimum possible information leakage rate.

In \cite{Gomez:2013ISIT} the \textit{privacy-power function} $\mathcal{I}(\bar{P},\hat{P})$ is defined  as the minimum information leakage rate that can be achieved when the energy management policy satisfies the average power constraint $\mathbbm{E}\big[\sum_{t=1}^n (X_t - Y_t)\big] \leq \bar{P}$, as well as the peak power constraint $0 \leq X_t - Y_t \leq \hat{P}$, $\forall t$. The privacy-power function has the single-letter characterization provided by the following theorem.
\begin{theorem}\label{th:average_peak}
\cite[Theorem 1]{Gomez:2013ISIT} The privacy-power function $\mathcal{I}(\bar{P},\hat{P})$ for an i.i.d. input load vector $X$ with distribution $p_{X}(x)$ and output load vector $Y$, when the average and peak values of the power provided by the RES are limited by $\bar{P}$ and $\hat{P}$, respectively, is given by
\begin{equation}\label{expr_privacy_power}
\mathcal{I}(\bar{P},\hat{P}) = \inf_{p_{Y|X} \in \mathcal{P}} I\left(X;Y\right),
\end{equation}
where $\mathcal{P} \triangleq \{ p_{Y|X}: y \in \mathcal{Y}, \mathbbm{E}[(X-Y)] \leq \bar{P}, 0 \leq X-Y \leq \hat{P}\}$.
\end{theorem}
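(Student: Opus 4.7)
The plan is a standard two-sided proof: a single-letter converse based on chain-rule manipulations and time-sharing, together with a memoryless achievability scheme. Since the constraint set $\mathcal{P}$ involves only a per-symbol peak bound and an average bound on $X_t-Y_t$ (no battery dynamics here), there is no coupling across time other than what the policy itself introduces, and a memoryless policy will turn out to be optimal.

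For the converse, I would start from an arbitrary feasible policy $f \in \mathcal{F}$ (restricted to the average/peak constraints of this theorem) and use
\begin{align*}
I(X^n;Y^n) &= H(X^n)-H(X^n\mid Y^n) \\
 &= \sum_{t=1}^n H(X_t) - H(X^n\mid Y^n) \\
 &\geq \sum_{t=1}^n H(X_t) - \sum_{t=1}^n H(X_t\mid Y_t) \\
 &= \sum_{t=1}^n I(X_t;Y_t),
\end{align*}
where the second equality uses the i.i.d.\ assumption on $X^n$ and the inequality uses $H(X^n\mid Y^n)\le \sum_t H(X_t\mid Y^n)\le \sum_t H(X_t\mid Y_t)$. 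Next I would introduce a time-sharing variable $Q$ uniform on $\{1,\dots,n\}$, independent of everything, and set $X\triangleq X_Q$, $Y\triangleq Y_Q$. Because $X_t$ are i.i.d., $X$ has distribution $p_X$ and is independent of $Q$, so
\[
\tfrac{1}{n}\sum_{t=1}^n I(X_t;Y_t) = I(X;Y\mid Q) \geq I(X;Y),
\]
the last inequality following from $H(X\mid Y,Q)\le H(X\mid Y)$ and the independence $X\perp Q$. It remains to verify that the induced $p_{Y\mid X}$ lies in $\mathcal{P}$: the peak constraint $0\le X-Y\le \hat P$ holds sample-wise because it holds for every $t$, and averaging over $Q$ yields $\mathbb{E}[X-Y]=\tfrac{1}{n}\sum_t \mathbb{E}[X_t-Y_t]\le \bar P$.

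For achievability, I would fix any $p_{Y\mid X}^\star\in \mathcal{P}$ and use the memoryless policy that, at each $t$, draws $Y_t$ independently from $p_{Y\mid X}^\star(\cdot\mid X_t)$. The pairs $(X_t,Y_t)$ are then i.i.d., the peak constraint is enforced because the support of $p_{Y\mid X}^\star$ respects it, and the average constraint is met in expectation by construction. Consequently $\tfrac{1}{n}I(X^n;Y^n)=I(X;Y)$ for every $n$, so the infimum on the right-hand side of \eqref{expr_privacy_power} is achievable. Taking the infimum over $p_{Y\mid X}^\star\in\mathcal{P}$ and combining with the converse yields the claimed equality.

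The main delicate point I expect is the converse: the step $H(X^n\mid Y^n)\le \sum_t H(X_t\mid Y_t)$ requires the i.i.d.\ structure of the source, and the passage through the time-sharing $Q$ must be done carefully to conclude $I(X;Y\mid Q)\ge I(X;Y)$ (which uses $X\perp Q$, again a consequence of the i.i.d.\ assumption). The feasibility check of the induced $p_{Y\mid X}$ inside $\mathcal{P}$ is otherwise mechanical, and achievability is essentially immediate once one has a candidate single-letter optimizer, since the absence of battery dynamics in this theorem means memoryless policies are in $\mathcal{F}$.
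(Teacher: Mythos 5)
Your proposal is correct, but note that the paper does not actually prove this statement: Theorem~\ref{th:average_peak} is imported verbatim from \cite[Theorem 1]{Gomez:2013ISIT}, so the only in-paper comparison points are the analogous single-letter proofs (e.g., Theorem~\ref{th:IT} and Theorem~\ref{th:ITR}). Those proofs share your converse skeleton (chain rule, i.i.d.\ source, conditioning reduces entropy, down to $\frac{1}{n}\sum_t I(X_t;Y_t)$) and your memoryless achievability, but they finish differently: each per-slot term $I(X_t;Y_t)$ is lower-bounded directly by the single-letter infimum, which works there because the constraints are purely per-symbol support constraints, so every induced $p_{Y_t|X_t}$ is feasible on its own. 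For the average-power constraint of Theorem~\ref{th:average_peak} that per-slot argument needs an extra ingredient, since $\mathbbm{E}[X_t-Y_t]$ may exceed $\bar{P}$ at individual $t$; the route consistent with the paper (and with \cite{Gomez:2013ISIT}) is to bound $I(X_t;Y_t)\geq \mathcal{I}(\bar{P}_t,\hat{P})$ with $\bar{P}_t=\mathbbm{E}[X_t-Y_t]$ and then invoke convexity and monotonicity of $\mathcal{I}(\bar{P},\hat{P})$ (the Lemma quoted right after the theorem) to average over $t$. Your time-sharing variable $Q$ accomplishes exactly the same thing without citing convexity explicitly: the mixture conditional $p_{Y|X}$ induced by $(X_Q,Y_Q)$ satisfies the average constraint by construction, and $I(X;Y\mid Q)\geq I(X;Y)$ follows from $X\perp Q$. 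Both arguments are sound and essentially equivalent (time-sharing is the standard proof of the convexity lemma anyway); yours is self-contained, while the convexity route matches the modular structure the paper uses elsewhere.
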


\begin{lemma}
\cite[Lemma 1]{Gomez:2013ISIT}
The privacy-power function $\mathcal{I}(\bar{P},\hat{P})$, given above, is a non-increasing convex function of $\bar{P}$ and $\hat{P}$.
\end{lemma}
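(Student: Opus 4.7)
The proof rests on two standard facts: (i) the feasible set $\mathcal{P}(\bar{P},\hat{P})$ is monotonically nested, growing whenever either of its parameters is relaxed; and (ii) the mutual information $I(X;Y)$ is convex in $p_{Y|X}$ when $p_X$ is held fixed, as a consequence of the joint convexity of the Kullback--Leibler divergence.

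First I would establish the non-increasing property. Whenever $\bar{P}_1\le\bar{P}_2$ and $\hat{P}_1\le\hat{P}_2$, any $p_{Y|X}$ meeting the tighter pair of constraints automatically meets the looser pair, giving the nesting $\mathcal{P}(\bar{P}_1,\hat{P}_1)\subseteq\mathcal{P}(\bar{P}_2,\hat{P}_2)$; the infimum in (\ref{expr_privacy_power}) over a larger set is no larger, so $\mathcal{I}(\bar{P}_1,\hat{P}_1)\ge\mathcal{I}(\bar{P}_2,\hat{P}_2)$.

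For convexity, I would fix $\lambda\in(0,1)$ and $\epsilon>0$, and choose $\epsilon$-optimizers $p_i\in\mathcal{P}(\bar{P}_i,\hat{P}_i)$ for $i=1,2$. The natural candidate at the parameters $\bar{P}_\lambda\triangleq\lambda\bar{P}_1+(1-\lambda)\bar{P}_2$ and $\hat{P}_\lambda\triangleq\lambda\hat{P}_1+(1-\lambda)\hat{P}_2$ is the pointwise mixture $p_\lambda(y|x)\triangleq\lambda p_1(y|x)+(1-\lambda)p_2(y|x)$. The average constraint $\mathbbm{E}_{p_\lambda}[X-Y]\le\bar{P}_\lambda$ is an affine identity in the mixture weights, and the bound $I(X;Y)_{p_\lambda}\le\lambda I_{p_1}(X;Y)+(1-\lambda)I_{p_2}(X;Y)\le\lambda\mathcal{I}(\bar{P}_1,\hat{P}_1)+(1-\lambda)\mathcal{I}(\bar{P}_2,\hat{P}_2)+\epsilon$ follows from convexity of mutual information in the conditional distribution. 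When $\hat{P}$ is held fixed, $p_\lambda$ inherits the common peak bound, and sending $\epsilon\downarrow 0$ yields convexity in $\bar{P}$.

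The main obstacle is the peak constraint when $\hat{P}$ is varied as well. Under the distribution-level mixture $p_\lambda$ the support of $X-Y$ is the union of the two individual supports, so only $X-Y\le\max(\hat{P}_1,\hat{P}_2)$ is guaranteed, which may strictly exceed $\hat{P}_\lambda$. I would resolve this by replacing the distribution-level mixture with a coupling in which $(X,Z_1)$ and $(X,Z_2)$ are built on a common copy of $X$ with conditional laws of $p_1$ and $p_2$, and a new output is defined via $Z_\lambda\triangleq\lambda Z_1+(1-\lambda)Z_2$; then $0\le Z_\lambda\le\hat{P}_\lambda$ pointwise and $\mathbbm{E}[Z_\lambda]\le\bar{P}_\lambda$. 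The residual task is to bound $I(X;\,X-Z_\lambda)$ by the convex combination of $I_{p_1}$ and $I_{p_2}$; this is the single non-routine ingredient and requires the alphabet of $Y$ to be rich enough to accommodate fractional combinations, the natural setting when $\mathcal{X},\mathcal{Y}$ are continuous. Combined with the monotonicity established above, this yields the lemma.
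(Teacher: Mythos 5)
Note first that the paper does not actually prove this lemma---it is imported verbatim from \cite[Lemma 1]{Gomez:2013ISIT}---so there is no in-paper proof to compare routes with; your proposal has to stand on its own. Its first two ingredients do: monotonicity via nesting of the feasible sets is correct, and convexity in $\bar{P}$ for fixed $\hat{P}$ via the mixture $p_\lambda$ and the convexity of $I(X;Y)$ in $p_{Y|X}$ for fixed $p_X$ is also correct. The gap is exactly where you flag it, and it is not a routine loose end. The inequality you still need, $I(X;X-Z_\lambda)\leq \lambda I_{p_1}(X;Y)+(1-\lambda)I_{p_2}(X;Y)$ for the realization-level average $Z_\lambda=\lambda Z_1+(1-\lambda)Z_2$, is not an instance of the convexity of mutual information in the conditional law (that fact concerns mixing \emph{distributions}, which is precisely what destroys the peak constraint), and as a general statement it is false: with $X$ uniform on $\{0,1\}$, $N$ uniform and independent of $X$, $Z_1=N$ and $Z_2=N\oplus X$ each give zero leakage, yet $\tfrac{1}{2}(Z_1+Z_2)$ reveals $X$ with positive probability, so the averaged output can leak strictly more than the convex combination. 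One could hope to rescue the step by choosing the coupling of $(Z_1,Z_2)$ given $X$ cleverly, but no such choice is exhibited, and in the discrete-alphabet setting that this paper actually works in (where Lemma \ref{lemma:discreteAlphabets} restricts $\mathcal{Y}=\mathcal{X}$) the convexity in $\hat{P}$ is genuinely delicate: for binary uniform $X$ with $\bar{P}$ slack, $\mathcal{I}(\bar{P},\hat{P})$ equals $H(X)$ for every $\hat{P}<1$ (the conditional supports of $Y$ are disjoint) and drops to $0$ at $\hat{P}=1$, a step behavior that is incompatible with convexity along the $\hat{P}$ axis. So no completion of your coupling argument can work at this level of generality; fractional (continuous-valued) outputs, which you correctly sense are needed, are not merely a convenience but essential, and even then the bound on $I(X;X-Z_\lambda)$ remains to be supplied.

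Two smaller remarks. What the paper actually uses downstream (evaluation by the Blahut--Arimoto algorithm) only requires that the single-letter problem (\ref{expr_privacy_power}) be a convex program in $p_{Y|X}$---convex objective over a convex feasible set---which your fixed-$(\bar{P},\hat{P})$ observations already give; convexity of $\mathcal{I}$ jointly in $(\bar{P},\hat{P})$ is a stronger statement. If you want a clean, fully rigorous restatement, prove monotonicity in both arguments and convexity in $\bar{P}$ for fixed $\hat{P}$ (your first two steps, with the $\epsilon$-optimizer device and $\epsilon\downarrow 0$ spelled out), and either defer the joint convexity to the cited reference's setting or add the hypotheses under which your coupling step can be made to go through.
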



It is shown in \cite{Gomez:2015TIFS} that, when the input load alphabet is discrete, i.e., $\mathcal{X}=\{0,1,\ldots,X_{\max}\}$, the output load alphabet $\mathcal{Y}$, which is not necessarily discrete, can be restricted to the input load alphabet, i.e., $\mathcal{Y} = \mathcal{X}$, without loss of optimality. Given this restriction and the convexity of the privacy-power function, $\mathcal{I}(\bar{P},\hat{P})$ can be numerically evaluated, e.g., by the efficient Blahut-Arimoto (BA) \cite{Blahut:1972} algorithm. The following lemma states that this property holds also in our setting for the various battery capacities we analyze in the following. Thus, in the discrete case, we can assume that all the involved random processes are defined over finite alphabets and that there is a minimum quantum of energy such that all the aforementioned quantities are integer multiples of this quantum.
\begin{lemma}\label{lemma:discreteAlphabets} If the input alphabet $\mathcal{X}$ is discrete, the output alphabet $\mathcal{Y}$ can be constrained to the input alphabet without loss of optimality.
\end{lemma}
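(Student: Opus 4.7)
The plan is to reduce the claim to a standard quantization-plus-data-processing argument. Since allowing a larger output alphabet can only decrease the infimum in (\ref{eq:generalMinimization}), it suffices to show the reverse direction: for every feasible energy management policy $f$ with output $Y_t\in\mathcal{Y}$, there is a feasible policy $f'$ with output $Y'_t\in\mathcal{X}$ whose information leakage rate is no greater. The construction I have in mind is purely deterministic---map each output to the smallest element of the input alphabet that dominates it---and then invoke the data processing inequality.

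Concretely, given $f$ and the induced trajectories $(X^n,E^n,B^n,Y^n)$, I would define $Y'_t := \min\{x\in\mathcal{X}:x\geq Y_t\}$. Since $X_t\in\mathcal{X}$ and $Y_t\leq X_t$ by (\ref{eq:constraintY}), this minimum is well-defined and satisfies $Y_t\leq Y'_t\leq X_t$, so that $0\leq X_t-Y'_t\leq X_t-Y_t\leq \hat{P}$, preserving (\ref{eq:constraintY}) and (\ref{eq:constraintPeak}). I would then let a new battery trajectory evolve as $B'_{t+1}=\min\{B'_t+E_t-(X_t-Y'_t),B_{\max}\}$ with $B'_1=B_1$. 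A short induction using the monotonicity of $\min\{\cdot,B_{\max}\}$ in its first argument shows $B'_t\geq B_t$ for all $t$, whence $X_t-Y'_t\leq X_t-Y_t\leq B_t+E_t\leq B'_t+E_t$, verifying (\ref{eq:constraintXY}). Thus $f'$ is feasible.

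For the privacy claim, the key observation is that $Y'^n$ is a componentwise deterministic function of $Y^n$, so $X^n\to Y^n\to Y'^n$ forms a Markov chain and the data processing inequality yields $I(X^n;Y'^n)\leq I(X^n;Y^n)$. Dividing by $n$ and taking $n\to\infty$ gives $\mathcal{I}^{i}_{f'}(B_{\max},\hat{P})\leq \mathcal{I}^{i}_{f}(B_{\max},\hat{P})$; infimizing over $f$ then completes the argument.

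The main obstacle, modest as it is, will be the inductive verification that $B'_t\geq B_t$ for all $t$---this is what guarantees that $f'$ never violates the battery constraint, even though it draws weakly less energy from storage than $f$. A secondary, more bookkeeping-level concern is to exhibit $f'$ formally as a causal map on $(X^t,E^t,B'^{t},Y'^{t-1})$ in the sense required by the feasibility set $\mathcal{F}$; this can be handled by letting $f'$ maintain an internal virtual state that tracks $(B_t,Y_t)$ produced by the original policy $f$, so that the ceiling quantization can be applied causally at each step.
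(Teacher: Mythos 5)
Your proposal is correct and follows essentially the same route as the paper's proof: map each output to the smallest dominating element of $\mathcal{X}$ (the paper's $X(y)=\min_{x\in\mathcal{X}}\{x\geq y\}$), note that the new output draws no more energy from storage so all constraints remain satisfied, and conclude via the Markov chain $X^n \to Y^n \to \hat{Y}^n$ and the data processing inequality. Your explicit induction showing $B'_t\geq B_t$ and the causality bookkeeping simply spell out details the paper leaves implicit.
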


\begin{proof}
The proof is similar to that of \cite[Theorem 2]{Gomez:2015TIFS}. Let $\mathcal{X}$ be the discrete input load alphabet and let $X(y)=\min_{x \in \mathcal{X}}\{x\geq y\}$. Then, for any given energy management policy, and the resultant output load $Y^n$, we define a new output load as $\hat{Y}(t) = X(Y(t))$, that is, $\hat{Y}$ is a post-processed version of $Y$, and $\hat{\mathcal{Y}}=\mathcal{X}$. By construction, we have that $X(t) \geq \hat{Y}(t) \geq Y(t), \forall t$, i.e., the power demanded by the battery cannot have a larger peak value than the original demanded power. Similarly, the new output load satisfies all the instantaneous power constraints as well. This proves that the policy is feasible. Also, the information leakage rate is not increased as $\hat{Y}$ is a deterministic function of $Y$, and thus $X - Y - \hat{Y}$ forms a Markov chain, and $I(X,Y)\geq I(X,\hat{Y})$ by the data processing inequality.
\end{proof}

Here we introduce a generic energy management policy, which we later specialize to the different scenarios we consider. This is a stationary and memoryless policy that generates $Y_t$ randomly using a conditional probability distribution that is based only on the current input load $X_t$ and the available total renewable energy $B_t+E_t$, i.e.,
\begin{equation}\label{eq:generalizedPolicy}
\tilde{p}_{Y|X,B+E} : \mathcal{X} \times (\mathcal{B+E}) \rightarrow \mathcal{Y}.
\end{equation}

Note that, in the presence of an RB, in which the generated renewable energy is stored and used for privacy, a memoryless energy management policy is suboptimal in general, as it ignores the history. However, in the following we show that a memoryless policy is able to achieve the minimum information leakage rate in the two extreme scenarios of $B_{\max}=\infty$ and $B_{\max}=0$.

\begin{figure*}[!b]
\vspace{-3mm}
\hrulefill
\setcounter{equation}{11}
\begin{equation}\label{eq:bestEffort}
    \tilde{p}_{Y|X,B+E}(y|x,b+e)=
\begin{cases}
    p^*_{Y|X}(y|x),      &\text{if } x-y^* \leq b+e \text{ and } y^* \neq x,\\
    p^*_{Y|X}(y|x) + \sum_{ \{y' \in \mathcal{Y}: x-y'>b+e\}} p^*_{Y|X}(y'|x),&\text{if } y^*=x,\\
    0,              &\text{if } x-y^* > b+e.
\end{cases}
\end{equation}
\setcounter{equation}{9}
\end{figure*}


\section{Infinite Battery Capacity}\label{sec:infinite}

In this section we relax the constraint on the battery capacity and consider $B_{\max}=\infty$. This is an extreme situation that may model a battery with a relatively large capacity compared to the average generation rate of renewable energy, $\bar{P}_E$, and the average input load. This scenario provides useful insights on the best achievable privacy performance, and also serves as a bound on the performance achievable with a finite-capacity RB.

In each time slot, the EMU is limited by both the peak power constraint (\ref{eq:constraintPeak}) and the energy available in the RB, which is the difference between the total renewable energy generated and the total energy that has been requested from the battery up to that time, i.e.,
\begin{equation} \label{eq:constrInf}
 	\sum_{t=1}^n (X_t - Y_t) \leq \sum_{t=1}^n E_t, \quad \forall n.
\end{equation}




\subsection{Generated Renewable Energy not Known by the UP}\label{sec:BinfiniteNotKnown}

In this section $E^n$ is treated as a random sequence whose realization is known only to the consumer in a causal manner. This scenario may occur if the renewable energy originates from sources which could be extremely difficult, if not impossible, for the UP to track.

The following theorem states that the minimum information leakage rate when $B_{\max} = \infty$ is equivalent to the average and peak power-constrained scenario, as in \cite{Gomez:2013ISIT}; that is, the cumulative constraints on the EMU policy do not reduce the achievable privacy if the battery capacity is sufficiently large.

\setcounter{equation}{10}
\begin{theorem}\label{th:Binfinite_peak}
If $B_{\max}=\infty$ and the peak power constraint on the amount of energy taken from the RB is $\hat{P}$, then the minimum information leakage rate for an i.i.d. input load $X$ and a renewable energy generation process with average power $\bar{P}_E$, is
\begin{equation}
\mathcal{I}^i(\infty,\hat{P}) = \mathcal{I}(\bar{P}_E, \hat{P}).
\end{equation}
\end{theorem}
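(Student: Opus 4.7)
The plan is to prove the equality by establishing matching converse (lower) and achievability (upper) bounds, both equal to $\mathcal{I}(\bar{P}_E,\hat{P})$ from Theorem~\ref{th:average_peak}. The key observation is that, in the asymptotic regime, the cumulative battery constraint (\ref{eq:constrInf}) imposes exactly an average power constraint of $\bar{P}_E$ per slot, which is the setting of the privacy--power function. The peak constraint carries over unchanged.

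For the converse, I would start from (\ref{eq:constrInf}) and (\ref{eq:constraintPeak}): any feasible policy $f\in\mathcal{F}$ for $B_{\max}=\infty$ must satisfy $0\leq X_t-Y_t\leq\hat{P}$ pointwise, and, after taking expectations in (\ref{eq:constrInf}) and using that $\{E_t\}$ is i.i.d.\ with mean $\bar{P}_E$, also $\frac{1}{n}\sum_{t=1}^n \mathbbm{E}[X_t-Y_t]\leq\bar{P}_E$. Exploiting that $X^n$ is i.i.d., I would write $I(X^n;Y^n)=\sum_t[H(X_t)-H(X_t\mid Y^n,X^{t-1})]\geq\sum_t I(X_t;Y_t)$, since further conditioning reduces entropy. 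Each per-slot term satisfies $I(X_t;Y_t)\geq \mathcal{I}(\mathbbm{E}[X_t-Y_t],\hat{P})$ by definition of the privacy--power function in Theorem~\ref{th:average_peak}. Averaging over $t$ and applying Jensen's inequality to the convex function $\mathcal{I}(\cdot,\hat{P})$ (Lemma in the excerpt), followed by its non-increasing monotonicity, yields $\frac{1}{n}I(X^n;Y^n)\geq\mathcal{I}(\bar{P}_E,\hat{P})$, and therefore $\mathcal{I}^i(\infty,\hat{P})\geq\mathcal{I}(\bar{P}_E,\hat{P})$.

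For achievability, I would use the memoryless ``best-effort'' policy (\ref{eq:bestEffort}) in which the nominal kernel $p^*_{Y|X}$ is chosen as an optimiser of $\mathcal{I}(\bar{P}_E-\epsilon,\hat{P})$ for an arbitrarily small $\epsilon>0$. The role of the $\epsilon$-slack is to make the expected per-slot battery drift $\mathbbm{E}[E-(X-Y^*)]\geq\epsilon>0$ strictly positive, so that a strong law of large numbers / Foster--Lyapunov argument ensures that $B_t\to\infty$ almost surely and that the probability of the fallback event $\{X_t-Y_t^*>B_t+E_t\}$ decays to zero with $t$. Conditioned on ``no fallback'', $Y_t$ is drawn from $p^*_{Y|X}(\cdot\mid X_t)$ independently of the past, so the joint distribution of $(X^n,Y^n)$ converges in total variation to the product $(p_X\cdot p^*_{Y|X})^{\otimes n}$. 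Combined with the i.i.d.\ input structure, this gives $\frac{1}{n}I(X^n;Y^n)\to I(X;Y^*)=\mathcal{I}(\bar{P}_E-\epsilon,\hat{P})$; letting $\epsilon\downarrow 0$ and using the continuity of $\mathcal{I}(\cdot,\hat{P})$ (from convexity) closes the matching upper bound.

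The main obstacle is the achievability step, because the battery recursion (\ref{battery_constraint}) introduces memory that couples successive slots and blocks a direct single-letter reduction. I would handle this by splitting each slot's contribution to $I(X^n;Y^n)$ into two parts: a healthy regime in which the per-slot channel coincides exactly with $p^*_{Y|X}$, and a fallback regime in which $Y_t=X_t$ contributes at most $H(X)$ bits per slot. The delicate step is to show that the empirical frequency of fallback slots vanishes, which is what the strictly positive drift provided by the $\epsilon$-slack buys, and then to use continuity of mutual information in its arguments and of $\mathcal{I}(\bar{P},\hat{P})$ in $\bar{P}$ to eliminate $\epsilon$ and obtain the exact expression $\mathcal{I}(\bar{P}_E,\hat{P})$.
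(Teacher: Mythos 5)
Your proposal is correct and takes essentially the same route as the paper: the converse is the observation (stated as ``trivial'' in the paper) that the cumulative battery constraint together with the peak constraint reduces to the average/peak-constrained setting of Theorem~\ref{th:average_peak}, and achievability uses exactly the paper's best-effort memoryless policy, with a strong-law-of-large-numbers (positive drift) argument showing that the fallback slots where $Y_t=X_t$ are asymptotically negligible and contribute at most $H(X)$ each. Your explicit $\epsilon$-slack in the nominal kernel (optimizing $\mathcal{I}(\bar{P}_E-\epsilon,\hat{P})$ and letting $\epsilon\downarrow 0$ via continuity of the convex privacy-power function) is a minor refinement of the paper's standing assumption $\mathbbm{E}[X-Y^*]<\bar{P}_E$, not a genuinely different method.
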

\setcounter{equation}{12}

$\mathcal{I}(\bar{P}_E, \hat{P})$ is a trivial lower bound on $\mathcal{I}^i(\infty,\hat{P})$. In the following section an energy management policy that achieves $\mathcal{I}^i(\infty,\hat{P})$ is presented. The proposed policy is a specialization of the generalized memoryless policy introduced in (\ref{eq:generalizedPolicy}).

\subsection{Optimal Energy Management Policy for \texorpdfstring{$B_{\max}=\infty$}{BInfinite}}\label{sec:optimalPolicyBinfinite}

Consider the following energy management policy. In each time slot $t$, the EMU, based on the instantaneous input load $X_t$, decides on the optimal portion of the input load to be received from the grid, $Y^{*}_{t}$, by using the optimal conditional probability distribution $p^*_{Y|X}$ that minimizes (\ref{expr_privacy_power}). If there is enough energy available to fully satisfy the EMU requests, i.e., $B_{t} + E_{t}\geq X_t-Y^{*}_{t}$, the EMU uses $X_t - Y^*_{t}$ units of renewable energy and $Y^*_{t}$ units of energy from the grid, i.e., $Y_t=Y_t^*$; otherwise, all the input load is satisfied directly from the grid, i.e., $Y_t = X_t$, thus leading to the maximum information leakage for that time instant, i.e., the UP learns $X_t$ perfectly. The time instants at which such leakage occurs cannot be computed beforehand, since they depend on the realizations of the renewable energy process, input and output loads. Given the nature of this policy, which tries to follow the optimal policy generated by ignoring the current SOC, we name it the \emph{best-effort energy management policy}. Algorithm \ref{alg:bestEffortPrivacy} summarizes this policy.

Equation (\ref{eq:bestEffort}), shown at the bottom of the page, specializes policy (\ref{eq:generalizedPolicy}) to the best-effort policy. The second case in (\ref{eq:bestEffort}) includes all the instances for which $p^*_{Y|X}$ outputs either $y^*=x$, or an infeasible output, i.e, for which $x-y^*>b+e$.

\begin{algorithm}[t]
\begin{algorithmic}[1]
\Statex
\State{Initial battery SOC: $B_0$.}
\State{Find $p^*_{Y|X}$ that minimizes (\ref{expr_privacy_power}) for given $\bar{P}_E$ and $\hat{P}$.}
\For {$t = 1, \ldots, n$}
\State{Input: $X_t, B_t, E_t$.}
\State{Generate $Y^*_t$ according to $p^*_{Y|X}$.}
\If{$B_t + E_t \geq X_t-Y^*_{t}$}
\State{Optimal policy is followed: $Y_t=Y^*_t$ and $X_t-Y^*_{t}$}
    \StatexIndent[3] taken from the battery.     
\Else
\State{Full leakage occurs: $Y_t=X_t$.}
\EndIf
\State{Next battery state: $B_{t+1}=\min\{B_t+E_t-(X_t-$}
 \StatexIndent[2] $Y_{t}),B_{\max}\}$.
\EndFor
\end{algorithmic}
\caption{Best-Effort Privacy Policy for $B_{\max}=\infty$.}
\label{alg:bestEffortPrivacy}
\end{algorithm}

Since the energy arrival is stochastic, it may seem that very little can be said about the information leakage rate. However, if the condition $\mathbbm{E}[X-Y^*] < \bar{P}_{E}$ holds, then it is possible to show that the number of times full leakage of information occurs due to unavailability of energy is relatively small compared to the operating time of the system. This is proved in the following lemma.

\begin{lemma}\label{lemma:BestEffort}
If $\mathbbm{E}[X-Y^*] < \bar{P}_E$, and the EMU follows the best-effort energy management policy, then almost surely the condition $B_{t}+E_{t}< X_t - Y^*_t$ holds only in finitely many time slots in the limit of infinite horizon.
\end{lemma}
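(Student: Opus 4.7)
The plan is to couple the real battery process with a ``virtual'' battery that pretends the optimal policy $p^*_{Y|X}$ is always feasible, and then use the strong law of large numbers (SLLN) to show that the virtual battery diverges to $+\infty$ almost surely. Since a leakage event at time $t$ requires $B_t + E_t < X_t - Y^*_t$, and the peak constraint built into $p^*_{Y|X}$ forces $X_t - Y^*_t \leq \hat{P}$, leakage cannot occur once the battery level is large enough.

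Concretely, let $Y^*_t$ be generated i.i.d.\ according to $p^*_{X,Y^*}(x,y) = p_X(x)p^*_{Y|X}(y|x)$, independently of $E^t$, and define the virtual battery level
\begin{equation*}
V_t \triangleq B_0 + \sum_{s=1}^{t-1} \bigl( E_s - (X_s - Y^*_s) \bigr).
\end{equation*}
Let $L_t \in \{0,1\}$ be the indicator that a full-leakage event occurs at time $t$. From the battery update rule of Algorithm \ref{alg:bestEffortPrivacy}, the actual battery satisfies
\begin{equation*}
B_{t+1} = B_t + E_t - (1 - L_t)(X_t - Y^*_t),
\end{equation*}
so that $B_{t+1} - V_{t+1} = (B_t - V_t) + L_t(X_t - Y^*_t) \geq B_t - V_t$. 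Since $B_0 = V_0$ and $X_t - Y^*_t \geq 0$, a straightforward induction gives $B_t \geq V_t$ for every $t$.

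Because $\{X_t\}$ and $\{E_t\}$ are i.i.d.\ and mutually independent and $Y^*_t$ is generated from $X_t$ memorylessly, the increments $\{E_t - (X_t - Y^*_t)\}_{t \geq 1}$ are i.i.d.\ with mean $\bar{P}_E - \mathbb{E}[X - Y^*] > 0$ by hypothesis. The SLLN therefore yields $V_t/t \to \bar{P}_E - \mathbb{E}[X-Y^*] > 0$ almost surely, and consequently $V_t \to \infty$ almost surely. Combined with $B_t \geq V_t$, for almost every sample path there exists a (random but finite) $T$ such that $B_t \geq \hat{P}$ for all $t \geq T$.

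To conclude, note that the optimal conditional distribution $p^*_{Y|X}$ lies in the feasible set $\mathcal{P}$ of Theorem \ref{th:average_peak}, hence $0 \leq X_t - Y^*_t \leq \hat{P}$. So if $B_t \geq \hat{P}$ then $B_t + E_t \geq \hat{P} \geq X_t - Y^*_t$ and no leakage can occur at time $t$. Therefore leakage events are confined to $t < T$, i.e.\ to finitely many time slots almost surely. The only subtle step, and the one deserving care in the write-up, is the coupling identity $B_t \geq V_t$: it requires that on non-leakage slots the actual withdrawal from $B_t$ coincides exactly with $X_t - Y^*_t$, which is guaranteed by the best-effort rule together with the fact that the policy $p^*_{Y|X}$ used to generate $Y^*_t$ is the same stationary distribution used to define $V_t$.
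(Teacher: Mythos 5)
Your proof is correct and follows essentially the same route as the paper's: both apply the strong law of large numbers to the i.i.d.\ surplus increments $E_t - (X_t - Y^*_t)$, whose mean $\bar{P}_E - \mathbbm{E}[X-Y^*]$ is positive by hypothesis, to conclude that energy shortfalls under the best-effort policy occur only finitely often almost surely. Your write-up is in fact somewhat more explicit than the paper's, since you spell out the domination $B_t \geq V_t$ of the actual battery over the virtual one and the role of the peak constraint $X_t - Y^*_t \leq \hat{P}$, both of which the published proof leaves implicit.
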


\begin{proof}
Let $\mathbbm{E}[X-Y^*] = \bar{P}_{E} - \epsilon$, for some $\epsilon > 0$. The sequence $E - (X-Y^*) - \epsilon$ has zero mean. By the strong law of large numbers, the sample average of the sequence converges almost surely to its expected value, i.e., the sequence of events $\{\frac{1}{n} \sum_{t=1}^{n} (E_t - (X_t-Y^*_t) - \epsilon) < -\epsilon \}_{n=1}^{\infty}$, and thus the sequence $\{\frac{1}{n} \sum_{t=1}^{n}(E_t- (X_t-Y^*_t))<0\}_{n=1}^{\infty}$ occurs only for finitely many times. This implies that, with $Y^*_t$ generated according to the best-effort policy, the unavailability of energy at any time, $B_t+E_t < X_t - Y^*_t$, occurs only for finitely many times.
\end{proof}


\begin{lemma}\label{lemma:BestEffortPPF}
If $\mathbbm{E}[X-Y^*] < \bar{P}_E$, then the minimum information leakage rate of the best-effort policy tends to $\mathcal{I}^i(\infty,\hat{P})$, as $n \rightarrow \infty$.
\end{lemma}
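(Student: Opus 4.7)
The plan is to show $\limsup_n \tfrac{1}{n} I(X^n; Y^n) \le \mathcal{I}(\bar{P}_E, \hat{P})$ under the best-effort policy; combined with the converse $\liminf_n \tfrac{1}{n} I(X^n; Y^n) \ge \mathcal{I}(\bar{P}_E, \hat{P}) = \mathcal{I}^i(\infty, \hat{P})$ (which holds because any policy feasible under \eqref{eq:constrInf} satisfies the asymptotic average constraint $\limsup_n \tfrac{1}{n}\sum_t \mathbbm{E}[X_t-Y_t] \le \bar{P}_E$ by the strong law of large numbers applied to $B_{n+1}\ge 0$, and therefore cannot beat Theorem \ref{th:average_peak}), this yields the advertised limit.

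The main idea is to couple the best-effort output with the ``ideal'' output the EMU would emit if energy were always available. For every time slot $t$, I would draw an auxiliary random variable $Z_t \sim p^*_{Y|X}(\cdot \mid X_t)$ using the minimizer of \eqref{expr_privacy_power}, with the draws mutually conditionally independent given $X^n$. Setting $L_t \triangleq \mathbbm{1}\{B_t + E_t < X_t - Z_t\}$, the best-effort rule \eqref{eq:bestEffort} forces $Y_t = (1-L_t)Z_t + L_t X_t$, so $Y^n$ is a deterministic function of $(X^n, Z^n, L^n)$. By the chain rule,
\begin{equation*}
I(X^n; Y^n) \le I(X^n; Z^n, L^n, Y^n) = I(X^n; Z^n) + I(X^n; L^n \mid Z^n) + I(X^n; Y^n \mid Z^n, L^n).
\end{equation*}

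Next I would handle the three terms separately. Since $X^n$ is i.i.d.\ and each $Z_t$ depends on $X^n$ only through $X_t$ via independent randomization, the pairs $(X_t, Z_t)$ are i.i.d.\ copies of $(X, Y^*)$, so $I(X^n; Z^n) = n\, I(X; Y^*) = n\,\mathcal{I}(\bar{P}_E, \hat{P})$, which is exactly the desired leading term. For the third term, on the event $\{L^n = l^n, Z^n = z^n\}$ the coordinates $Y_t$ with $l_t = 0$ are pinned to $z_t$, while the remaining $K_n \triangleq \sum_t L_t$ coordinates equal $X_t$; Lemma \ref{lemma:discreteAlphabets} lets me restrict $|\mathcal{Y}| \le |\mathcal{X}|$, giving $I(X^n; Y^n \mid Z^n, L^n) \le \mathbbm{E}[K_n] \log |\mathcal{X}|$. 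For the middle term, $I(X^n; L^n \mid Z^n) \le H(L^n) \le n\, h\!\bigl(\mathbbm{E}[K_n]/n\bigr)$ by concavity of the binary entropy applied to $\tfrac{1}{n}\sum_t \Pr(L_t=1)$.

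The closing step invokes Lemma \ref{lemma:BestEffort}: the hypothesis $\mathbbm{E}[X - Y^*] < \bar{P}_E$ ensures $K_n/n \to 0$ almost surely, and the bound $K_n/n \le 1$ upgrades this to $\mathbbm{E}[K_n]/n \to 0$ via bounded convergence. Both residual terms are therefore $o(n)$, so dividing the displayed inequality by $n$ and letting $n \to \infty$ finishes the upper bound. The main obstacle I anticipate is the bookkeeping around the coupling: $L_t$ depends on the battery state $B_t$, which in turn depends on the history of $(X, Y, E)$, and through $Y$ on past $(Z, L, X)$, so one must carefully verify that this entanglement does not disturb the marginal i.i.d.\ structure of $Z^n$ that underpins the identity $I(X^n; Z^n) = n\,\mathcal{I}(\bar{P}_E, \hat{P})$. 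This is resolved by drawing all the $Z_t$ at the outset from independent randomness, so that the marginal law of $Z^n$ is determined purely by the i.i.d.\ $X^n$ before any battery dynamics come into play.
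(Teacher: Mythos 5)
Your proposal is correct, but it takes a genuinely different route from the paper's for the substantive direction. The paper's proof single-letterizes directly: using that $X$ is i.i.d. and that conditioning reduces entropy, it bounds $\frac{1}{n}I(X^n;Y^n)$ below by $\frac{1}{n}\sum_t I(X_t;Y_t)$, splits the time axis into the (random, almost surely finite by Lemma \ref{lemma:BestEffort}) set $\mathcal{T}$ of full-leakage slots and its complement where $Y_t=Y^*_t$, and concludes from $\frac{n-m}{n}\mathcal{I}^i(\infty,\hat{P})+\frac{m}{n}H(X)\rightarrow\mathcal{I}^i(\infty,\hat{P})$; the matching upper bound, i.e. the direction that actually certifies that the best-effort policy leaks no more than $\mathcal{I}(\bar{P}_E,\hat{P})$, is left essentially implicit in the observation that only finitely many slots deviate from $p^*_{Y|X}$. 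You make that upper bound explicit through the coupling decomposition $I(X^n;Y^n)\le I(X^n;Z^n)+I(X^n;L^n\mid Z^n)+I(X^n;Y^n\mid Z^n,L^n)$, which isolates the ideal i.i.d. contribution $n\,\mathcal{I}(\bar{P}_E,\hat{P})$ and controls the two residual terms by $\mathbbm{E}[K_n]$, and you pass from the almost-sure finiteness in Lemma \ref{lemma:BestEffort} to $\mathbbm{E}[K_n]/n\rightarrow 0$ by bounded convergence---a step the paper glosses over by treating the number $m$ of failures as a deterministic finite constant; your observation that the faithfulness of the coupling is unharmed by the dependence of $L_t$ on the battery history is exactly the right point to check. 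Your lower bound comes from the converse side (no feasible policy can beat $\mathcal{I}(\bar{P}_E,\hat{P})$), which is what the paper calls the trivial lower bound, so there is no circularity; note, though, that taking expectations in (\ref{eq:constrInf}) already gives $\frac{1}{n}\sum_t\mathbbm{E}[X_t-Y_t]\le\bar{P}_E$ for every $n$, so the appeal to the strong law of large numbers there is unnecessary. What each approach buys: the paper's chain is shorter and purely entropy-based, while yours yields a clean non-asymptotic bound $\frac{1}{n}I(X^n;Y^n)\le\mathcal{I}(\bar{P}_E,\hat{P})+h\bigl(\mathbbm{E}[K_n]/n\bigr)+\frac{\mathbbm{E}[K_n]}{n}\log|\mathcal{X}|$ and is more rigorous on the achievability direction, at the mild price of a finite input alphabet (the same discrete setting the paper works in, whose own proof already uses $H(X)<\infty$); also, Lemma \ref{lemma:discreteAlphabets} is not actually needed for your third term, since on $\{L_t=1\}$ one has $Y_t=X_t\in\mathcal{X}$ anyway.
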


\begin{proof}
Divide the sequence of input and output loads according to the time instants in which a private SM operation is achieved, i.e., the time instants the EMU can fully emulate $p^*_{Y|X}$, and time instants in which full leakage occurs. From Lemma \ref{lemma:BestEffort} we know that as $n \rightarrow \infty$, there is only a finite number of time instants, say $m$, in which the level of privacy induced by $p^*_{Y|X}$ is not achieved, i.e., for which the condition $B_t+E_t < X_t - Y^*_t$ holds, when $Y^*_t$ is generated based on $p^*_{Y|X}$. We remind that the condition $X_t - Y^*_t < \hat{P}$ always holds. Then, we can write
\begin{subequations}
\begin{align}
&\frac{1}{n} I(X^n;Y^n) = \frac{1}{n} \Big[H(X^{n})-H(X^{n}|Y^{n}) \Big] \label{Binfty_Enotknown_BEpolicy1} \\
 &= \frac{1}{n} \Bigg[ \sum_{t=1}^n H(X_t)-H(X_t|X^{t-1},Y^n )\Bigg] \label{Binfty_Enotknown_BEpolicy2}  \\
&\geq \frac{1}{n} \Bigg[ \sum_{t=1}^{n} H(X_t) - H(X_t|Y_t) \Bigg]  \label{Binfty_Enotknown_BEpolicy3}  \\
&= \frac{1}{n}  \Bigg[ \sum_{t \in \mathcal{T}^C} I(X_t;Y_t=Y^*_t) + \sum_{t \in \mathcal{T}}   I(X_t;Y_t=X_t) \Bigg]   \\ 
&\geq \frac{n-m}{n}\mathcal{I}^i(\infty,\hat{P}) + \frac{m}{n} H(X) \xrightarrow{n \rightarrow \infty} \mathcal{I}^i(\infty,\hat{P}) \label{eq:be3},
\end{align}
\end{subequations}
where $\mathcal{T}$ is the set of instants when full leakage of information takes place, i.e., for which $Y_t=X_t$, and $\mathcal{T}^C$ is the set of time instants in which the output is generated through $p^*_{Y|X}$, i.e., $Y_t=Y^*_t$; (\ref{Binfty_Enotknown_BEpolicy3}) follows since conditioning reduces entropy; (\ref{eq:be3}) follows since $m$ is finite.
\end{proof}

\subsection{Store-and-Hide Energy Management Policy}

Here we provide an alternative energy management policy in the case of an infinite-capacity battery. The \textit{store-and-hide energy management policy} consists of an initial \emph{storage} phase, during which all the energy requests of the user are satisfied from the grid while all the generated renewable energy is stored in the battery, and a second \emph{hiding} phase, during which the EMU deploys the optimal policy $p^*_{Y|X}$.

More formally, consider $n$ time slots. In the first $s(n)$ time slots, the so-called \textit{storage phase}, no privacy is achieved because we have $Y_t=X_t$, for $t=1,2,\ldots,s(n)$. In the remaining $n-s(n)$ time slots, the so-called \textit{hiding phase}, user demand is satisfied by taking energy from both the grid and the battery according to the optimal policy $p^*_{Y|X}$. We assume that $s(n) = o(n)$, with $ \lim_{n \to \infty} s(n)= \infty $, and $\lim_{n \to \infty} n-s(n)= \infty $. The initial waiting time $s(n)$ enables the battery to store on average $s(n) \bar{P}_E$ units of energy. In the following lemma we show that the energy stored in the initial storage phase is sufficient to let the EMU follow the optimal energy management policy $p^*_{Y|X}$ during the hiding phase, without energy outages almost surely. After $s(n)$ units of time, thanks to the energy already stored in the RB, the system is able to overcome the uncertainty in the energy arrival, and is able to adopt the optimal privacy-preserving energy management policy for the remaining time.

\begin{remark}\label{rem:noInfo} It is noteworthy that no information about the recharge process of the battery is required, and all the EMU needs to know is the average power generated by the renewable energy process, $\bar{P}_E$.
\end{remark}

\begin{lemma}\label{lemma:storeAndHide}
With a storage phase of length $s(n) = o(n)$, where $ \lim_{n \to \infty} s(n)= \infty $, and $ \lim_{n \to \infty} n-s(n)= \infty $, the store-and-hide policy satisfies the energy constraints in (\ref{eq:constrInf}) almost surely provided that $\mathbbm{E}[X-Y^*]  < \bar{P}_E$.
\end{lemma}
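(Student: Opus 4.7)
The plan is to split the constraint in (\ref{eq:constrInf}) into its trivially-satisfied storage-phase portion and a hiding-phase portion that reduces to controlling a random walk with a strictly negative drift against the renewable energy banked during storage. During the storage phase we have $Y_t=X_t$, so $\sum_{t=1}^k(X_t-Y_t)=0$ for every $k\leq s(n)$ and (\ref{eq:constrInf}) holds deterministically. For $k>s(n)$, (\ref{eq:constrInf}) is equivalent to
\begin{equation*}
\sum_{t=s(n)+1}^{k}\bigl[(X_t-Y^*_t)-E_t\bigr] \;\leq\; \sum_{t=1}^{s(n)}E_t .
\end{equation*}
Setting $\zeta_t\triangleq (X_t-Y^*_t)-E_t$ and $\epsilon\triangleq \bar{P}_E-\mathbbm{E}[X-Y^*]>0$, the $\zeta_t$ form an i.i.d.\ sequence with $\mathbbm{E}[\zeta_t]=-\epsilon$. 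Feasibility up to horizon $n$ then becomes the single inequality $V_n\leq A_n$, where $A_n\triangleq\sum_{t=1}^{s(n)}E_t$ is the energy accumulated during storage and $V_n\triangleq\sup_{k>s(n)}\sum_{t=s(n)+1}^{k}\zeta_t$ is the worst-case hiding-phase deficit.

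Next I would analyze the two sides separately. Applying the SLLN to $\{E_t\}$ gives $A_n/s(n)\to\bar{P}_E$ almost surely, so $A_n\to\infty$ a.s.\ because $s(n)\to\infty$. For the hiding side, the SLLN applied to $\{\zeta_t\}$ yields $\tfrac{1}{k}\sum_{t=s(n)+1}^{s(n)+k}\zeta_t\to -\epsilon$ a.s., so the partial sums of the hiding-phase walk diverge to $-\infty$, and hence their supremum $V_n$ is a.s.\ finite. By time-homogeneity of the i.i.d.\ increments, the law of $V_n$ does not depend on $n$ and coincides with that of $V^{*}\triangleq \sup_{k\geq 1}\sum_{t=1}^{k}\zeta_t'$ for a fresh copy. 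Moreover $V_n$ depends only on the indices $t>s(n)$ while $A_n$ depends only on the indices $t\leq s(n)$, so the two quantities are independent.

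Combining the pieces, for any threshold $M>0$,
\begin{equation*}
\Pr[V_n>A_n] \;\leq\; \Pr[V^{*}>M] + \Pr[A_n\leq M],
\end{equation*}
and letting first $n\to\infty$ and then $M\to\infty$ drives both terms to zero, so (\ref{eq:constrInf}) is satisfied almost surely in the large-horizon limit. The main obstacle I anticipate is precisely the dependence of the storage length $s(n)$ on the horizon $n$: one cannot simply invoke a single walk on $\{\zeta_t\}_{t\geq 1}$ and take its supremum once and for all. The key technical observation that unlocks the argument is that the i.i.d.\ structure of $\{\zeta_t\}$ renders $V_n$ stochastically equivalent to the horizon-independent supremum $V^{*}$, which together with the a.s.\ divergence of $A_n$ allows $A_n$ to dominate $V_n$ uniformly over $n$.
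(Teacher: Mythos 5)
Your proposal is correct, and it follows the same overall reduction as the paper --- split the horizon into the storage and hiding phases, note the constraint is vacuous during storage, and compare the hiding-phase random walk with negative drift $\zeta_t=(X_t-Y^*_t)-E_t$ against the energy banked during storage --- but the key technical step is handled by a different tool. The paper invokes a corollary of Wald's identity (a Chernoff-type bound on threshold crossing), approximates the banked energy by $s(n)\bar{P}_E$ via the LLN, and concludes that the crossing probability is at most $\exp(-r^*\alpha)$ with $\alpha=-s(n)\bar{P}_E\to-\infty$; this yields an explicit exponential decay of the failure probability in $s(n)$, at the price of requiring a finite moment-generating function for the increments (harmless here, since the alphabets are bounded). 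You instead observe that, by the SLLN, the all-time supremum $V^*$ of the negative-drift walk is almost surely finite and that its law does not depend on $n$, keep the banked energy as the random sum $A_n=\sum_{t=1}^{s(n)}E_t\to\infty$ a.s.\ rather than replacing it by its mean, exploit the independence of $A_n$ and $V_n$, and finish with the union bound $\Pr[V_n>A_n]\leq\Pr[V^*>M]+\Pr[A_n\leq M]$. Your route is more elementary (only finite first moments are needed) and slightly more careful about the fluctuations of the stored energy, but it gives no rate of convergence; the paper's Wald/Chernoff argument is less self-contained but quantifies how quickly the probability of an energy outage vanishes with the storage-phase length. Both arguments establish the same asymptotic statement, namely that the probability of violating (\ref{eq:constrInf}) anywhere in the horizon vanishes as $n\to\infty$.
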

The proof can be found in Appendix \ref{ap:storeAndHide}.

By means of Lemma \ref{lemma:storeAndHide} it is possible to show that the minimum information leakage rate of the store-and-hide policy approaches $\mathcal{I}^i(\infty,\hat{P})$ as $n \rightarrow \infty$, as shown in the following lemma, whose proof can be found in Appendix \ref{ap:storeAndHidePPF}.

\begin{lemma}\label{lemma:storeAndHidePPF}
If $\mathbbm{E}[X-Y^*]  < \bar{P}_E$, then the information leakage rate of the store-and-hide policy with $s(n)$ as specified in Lemma \ref{lemma:storeAndHide} approaches $\mathcal{I}^i(\infty,\hat{P})$ as $n \rightarrow \infty$.
\end{lemma}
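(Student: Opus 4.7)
My plan is to prove the convergence via matching upper and lower bounds. The lower bound $\frac{1}{n}I(X^n;Y^n)\geq\mathcal{I}^i(\infty,\hat{P})$ is immediate from Theorem \ref{th:Binfinite_peak}, since the store-and-hide policy is feasible. The content of the lemma is therefore the matching asymptotic upper bound, and I would obtain it by exploiting the simple structure of the two phases together with Lemma \ref{lemma:storeAndHide}.

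First I would partition the horizon into the storage block $\{1,\ldots,s(n)\}$, where by construction $Y_t=X_t$, and the hiding block $\{s(n)+1,\ldots,n\}$, where the EMU attempts to emit $Y_t=Y_t^*$ drawn (independently across $t$ given $X_t$) from the minimiser $p^*_{Y|X}$ of (\ref{expr_privacy_power}). By Lemma \ref{lemma:storeAndHide}, when $\mathbbm{E}[X-Y^*]<\bar{P}_E$ the energy constraints in the hiding block fail on at most a finite (random) number $m$ of slots almost surely. On those exceptional slots I would simply set $Y_t=X_t$, which upper-bounds their per-slot mutual information by $H(X)$.

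Next I would use the key independence structure: the $X_t$'s are i.i.d.; on non-exceptional slots $Y_t$ is drawn from $p^*_{Y|X}(\cdot|X_t)$ using fresh randomness; and the map from $X^n$ to $Y^n$ uses disjoint coordinates in the two blocks plus independent randomization. This makes $(X^{s(n)},Y^{s(n)})$ and $(X_{s(n)+1}^n,Y_{s(n)+1}^n)$ independent, so
\begin{equation*}
I(X^n;Y^n)=I(X^{s(n)};Y^{s(n)})+I(X_{s(n)+1}^n;Y_{s(n)+1}^n).
\end{equation*}
The first term equals $s(n)H(X)$. Within the hiding block the product structure gives $I(X_{s(n)+1}^n;Y_{s(n)+1}^n)=\sum_{t=s(n)+1}^{n}I(X_t;Y_t)$, with at most $m$ summands equal to $H(X)$ and the remaining $n-s(n)-m$ summands equal to $\mathcal{I}(\bar{P}_E,\hat{P})=\mathcal{I}^i(\infty,\hat{P})$ by Theorem \ref{th:Binfinite_peak}. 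Dividing by $n$,
\begin{equation*}
\tfrac{1}{n}I(X^n;Y^n)\leq\tfrac{s(n)+m}{n}H(X)+\tfrac{n-s(n)-m}{n}\mathcal{I}^i(\infty,\hat{P}).
\end{equation*}
Since $s(n)=o(n)$ and $m$ is almost surely finite, both correction terms vanish as $n\to\infty$, so the upper bound matches the lower bound.

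The main obstacle I anticipate is handling the \emph{random} number of outage slots $m$ rigorously, since $I(X^n;Y^n)$ is an average over the joint distribution and the location of outages is random. I would address this either by conditioning on the high-probability event $\{m\leq M\}$ (and letting $M\to\infty$ after $n\to\infty$ using Lemma \ref{lemma:storeAndHide}), or by noting that, conditioned on the $\sigma$-algebra generated by $X^n$ and $E^n$, the mutual information bound holds pathwise and then taking expectations. Everything else is a straightforward exploitation of the independence structure built into the store-and-hide policy.
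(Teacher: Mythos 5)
Your proposal is correct in substance and performs the same accounting as the paper's proof: split the horizon at $s(n)$, charge $H(X)$ per slot to the storage phase and $\mathcal{I}^i(\infty,\hat{P})$ per slot to the hiding phase, and let $s(n)/n \rightarrow 0$. Two differences are worth noting. First, the paper's displayed chain is in the lower-bound direction (single-letterization via ``conditioning reduces entropy''), with the matching upper bound left implicit in the product structure of the memoryless policy; you make that upper bound explicit through the decomposition $I(X^n;Y^n)=I(X^{s(n)};Y^{s(n)})+I(X_{s(n)+1}^n;Y_{s(n)+1}^n)$, which is the direction achievability actually needs, so on this point your write-up is, if anything, more complete. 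Second, you budget for finitely many outage slots in the hiding phase in the style of the best-effort argument (Lemmas \ref{lemma:BestEffort} and \ref{lemma:BestEffortPPF}), whereas the paper invokes Lemma \ref{lemma:storeAndHide}, which gives the stronger guarantee that with probability tending to one \emph{no} outage occurs during the entire hiding phase, and then simply assumes $p^*_{Y|X}$ operates throughout. Do note that the block independence and the per-slot identity $I(X_t;Y_t)=\mathcal{I}^i(\infty,\hat{P})$ hold exactly only on the no-outage event, since the locations of outages depend on the whole history of $(X^t,E^t)$ and the randomization; your first proposed remedy (condition on the high-probability no-outage event and bound the complementary contribution by $nH(X)$ times its vanishing probability) is the right way to close this and goes beyond the paper's own level of detail, whereas your second suggestion does not work as stated, because mutual information is a functional of the joint distribution and admits no ``pathwise'' bound conditioned on $\sigma(X^n,E^n)$ that one could then average.
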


\begin{remark}Even though the two schemes described above achieve the same privacy performance as $n \rightarrow \infty$, they do have some conceptual differences. During the initial phase of energy saving, the store-and-hide policy satisfies all the user demands from the grid leaking full information. Therefore, the SM readings reveal user's activity completely in this period. While the impact of this on the information leakage rate vanishes as $n \rightarrow \infty$, this might not be preferable in practice. Therefore, we believe that the best-effort policy is more appropriate for practical applications.
\end{remark}




\subsection{Generated Renewable Energy Known by the UP}

\begin{figure}[!t]
\centering
\includegraphics[width=\columnwidth]{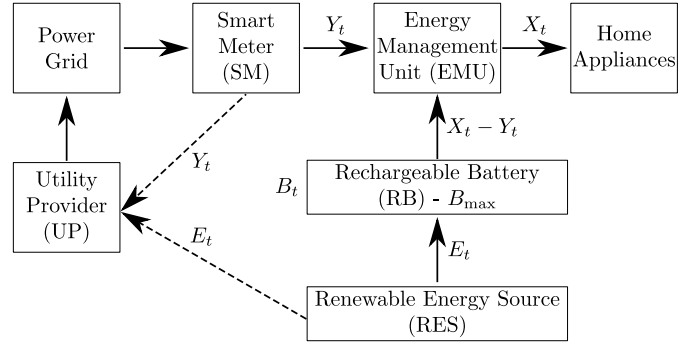}
\caption{The UP has perfect knowledge about the realizations of the renewable energy generation process, in addition to the energy used from the grid that is reported through the SM readings.}
\label{fig:BinfiniteEknownToUP}
\end{figure}

Here we assume that the UP knows the realization of the renewable energy process $E^n$, as highlighted in Figure \ref{fig:BinfiniteEknownToUP}. This scenario can occur if, for example, we consider solar energy as the RES, and the UP can accurately estimate the renewable energy produced from its own observations in nearby locations, weather forecast of the area, and the specifications of the solar panel. This is a worst-case situation and we expect the amount of leaked information in this case to be greater than or equal to that of the previous scenario, in which only the EMU knows the current state of the renewable energy produced. In this setting, the information leakage rate is defined as
\begin{equation}\label{eq:Binfinite_Eknown}
\bar{\mathcal{I}}^{i}(\infty, \hat{P}) \triangleq \inf_{f \in \mathcal{F}} \lim_{n \rightarrow \infty} \frac{1}{n}I(X^n;Y^n|E^n).
\end{equation}

The following theorem states that $E^n$ does not necessarily provide more information to the UP compared to the scenario where the UP does not have access to this information.

\begin{theorem}\label{th:BinfiniteKnown_rate}
If $B_{\max}=\infty$, the minimum information leakage rates for the cases in which $E^n$ is either known or not known to the UP are the same, i.e., $\bar{\mathcal{I}}^{i}(\infty, \hat{P}) = \mathcal{I}^{i}(\infty, \hat{P})$.
\end{theorem}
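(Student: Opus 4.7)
The plan is to prove the inequalities $\bar{\mathcal{I}}^{i}(\infty, \hat{P}) \geq \mathcal{I}^{i}(\infty, \hat{P})$ and $\bar{\mathcal{I}}^{i}(\infty, \hat{P}) \leq \mathcal{I}^{i}(\infty, \hat{P})$ separately. The converse direction $\geq$ is essentially immediate: since $X^n$ and $E^n$ are statistically independent by model assumption, $I(X^n; E^n) = 0$, so expanding the joint mutual information $I(X^n; Y^n, E^n)$ two different ways via the chain rule yields $I(X^n; Y^n | E^n) = I(X^n; Y^n) + I(X^n; E^n | Y^n) \geq I(X^n; Y^n)$ for every feasible policy $f \in \mathcal{F}$. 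Taking $\inf_f$ and then $n \to \infty$ preserves the inequality.

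For the achievability direction $\leq$, I would reuse the best-effort policy of Algorithm~\ref{alg:bestEffortPrivacy} (the store-and-hide policy would work equally well). The structural point is that on the ``good event'' $G_n$ that no full-leakage fallback is triggered during the first $n$ slots, each $Y_t = Y_t^*$ is drawn from $p^*_{Y|X}(\cdot\,|X_t)$ using auxiliary randomness independent of everything else, and in particular the EMU's decisions never actually depend on the renewable realization $E^n$. Hence, conditionally on $G_n$, the sequence $Y^n$ is a stochastic function of $X^n$ alone, and so $I(X^n; Y^n | E^n, G_n) = I(X^n; Y^n | G_n)$; the latter quantity is already shown to converge to $n\, \mathcal{I}(\bar{P}_E, \hat{P}) = n\, \mathcal{I}^{i}(\infty, \hat{P})$ in the proof of Lemma~\ref{lemma:BestEffortPPF}, invoking Theorem~\ref{th:Binfinite_peak}. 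Lemma~\ref{lemma:BestEffort} supplies $P(G_n) \to 1$ as $n \to \infty$.

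The main obstacle I anticipate is cleanly absorbing the bad-event contribution while $E^n$ sits in the conditioning. I plan a Fano-style indicator decomposition: writing
\begin{equation*}
I(X^n; Y^n | E^n) \leq H(\mathbbm{1}_{G_n}) + P(G_n)\, I(X^n; Y^n | E^n, G_n) + P(G_n^c)\, n H(X),
\end{equation*}
the first term is at most $1$, the middle term reduces to the unconditional analysis on $G_n$ as above, and the last term vanishes after normalization by $n$ because $P(G_n^c) \to 0$. A subtlety is that conditioning on $G_n$ slightly perturbs the joint law of $(X^n, E^n)$ — $G_n$ depends on $E^n$ through the running battery state — but this perturbation is confined to the same almost-surely finite set of fallback slots that is already shown to contribute $o(n)$ to the leakage in the unconditional proof, so the asymptotic per-slot rate is unchanged. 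Combining the converse and achievability yields the claimed equality.
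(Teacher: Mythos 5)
Your converse half is exactly the paper's argument (chain rule plus independence of $X^n$ and $E^n$), and it is correct. The gap is in the achievability half, and it is concentrated in one claim: that Lemma \ref{lemma:BestEffort} supplies $P(G_n)\to 1$, where $G_n$ is the event that the best-effort policy triggers no full-leakage fallback in the first $n$ slots. Lemma \ref{lemma:BestEffort} only says that, almost surely, the shortfall condition $B_t+E_t<X_t-Y_t^*$ occurs in finitely many slots; it does not say that with probability tending to one it never occurs. The events $G_n$ are decreasing in $n$, and for any finite initial SOC there is a fixed positive probability of a shortfall already in the first slot (e.g. $E_1=0$, $B_0$ small, $X_1-Y_1^*>0$), so $P(G_n^c)$ is bounded away from zero uniformly in $n$. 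Consequently the last term of your Fano-style bound, $P(G_n^c)\,nH(X)$, is $\Theta(n)$, and after normalization it leaves a non-vanishing residue $P(G_n^c)H(X)$: the decomposition does not converge to $\mathcal{I}^i(\infty,\hat{P})$. A secondary soft spot is the middle term: conditioning on $G_n$ couples $(X^n,E^n,Y^n)$ through the whole battery trajectory, and saying the perturbation is "confined to the fallback slots" does not by itself control a conditional mutual information; also, the quantity you want to cite from Lemma \ref{lemma:BestEffortPPF} is established there as a lower bound on the unconditional leakage, not as the upper bound you need.

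The fix is the one the paper actually uses, and which you mention only parenthetically: run the store-and-hide policy instead. There the hiding-phase outputs are drawn from $p^*_{Y|X}(\cdot\,|X_t)$ with randomness independent of $E^n$, and Lemma \ref{lemma:storeAndHide} (via Wald's identity in Appendix \ref{ap:storeAndHide}) shows that the probability of ever depleting the stored buffer of size $s(n)\bar{P}_E\to\infty$ during the hiding phase vanishes as $n\to\infty$ — this is the correct substitute for your "$P(G_n)\to 1$". The storage phase contributes at most $s(n)H(X)=o(n)$, and on the no-depletion event $Y^n$ is a stochastic function of $X^n$ alone, so the term $I(E^n;X^n|Y^n)$ appearing in the converse chain is asymptotically negligible; this is precisely how the paper turns the inequality $\bar{\mathcal{I}}^{i}(\infty,\hat{P})\geq\mathcal{I}^{i}(\infty,\hat{P})$ into an equality. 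Keep your converse and your bookkeeping skeleton, but swap the best-effort policy for store-and-hide and replace the appeal to Lemma \ref{lemma:BestEffort} with Lemma \ref{lemma:storeAndHide}.
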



\begin{proof}
We have the following chain of inequalities:
\begin{subequations}\label{sub}
\begin{align}
\lim_{n \rightarrow \infty} &\frac{1}{n} I(X^n;Y^n|E^n) = \lim_{n \rightarrow \infty} \frac{1}{n} I(X^n;Y^n,E^n) \label{eq:Binfinite_Eknown1}\\
&= \lim_{n \rightarrow \infty} \frac{1}{n} [I(X^n;Y^n) + I(E^n;X^n|Y^n)]\\
&\geq \lim_{n \rightarrow \infty} \frac{1}{n} I(X^n;Y^n), \label{eq:Binfinite_Eknown2}
\end{align}
\end{subequations}
where (\ref{eq:Binfinite_Eknown1}) follows as $X$ and $E$ are independent from each other, and (\ref{eq:Binfinite_Eknown2}) is due to the non negativity of mutual information. Thus, we have $\bar{\mathcal{I}}^{i}(\infty, \hat{P}) \geq \mathcal{I}^{i}(\infty, \hat{P})$.


The inequality in (\ref{eq:Binfinite_Eknown2}) becomes an equality if $I(E^n; X^n|Y^n)=0$. This condition can be achieved by the store-and-hide policy. In fact, at the end of the storage phase the battery is filled up with an infinite amount of energy, and, as a consequence, the optimal policy during the hiding phase $p^*_{Y|X}$ does not need to take the information about the RES into account. This implies that $\lim_{n \rightarrow \infty} I(E^n; X^n|Y^n)=0$; and therefore, $\lim_{n \rightarrow \infty} \frac{1}{n} I(X^n;Y^n|E^n) = \lim_{n \rightarrow \infty} \frac{1}{n} I(X^n;Y^n)$, and that $\bar{\mathcal{I}}^{i}(\infty, \hat{P}) = \mathcal{I}^{i}(\infty, \hat{P})$.
\end{proof}

\section{SM System Without Energy Storage} \label{sec:zero}

In this section we focus on another extreme scenario in which there is no RB for storing extra renewable energy, i.e., $B_{\max}=0$. The renewable energy available at time slot $t$, $E_t$, can be considered as an i.i.d. state information, and could be known, or not, to the UP. Given $E_t$ and $X_t$, the EMU decides on the amount of energy to use from the grid and from the RES. In each time slot $t=1,\ldots,n$ the energy that can be obtained from the RES, $X_t-Y_t$, is limited by the energy generated in time slot $t$, $E_t$, i.e., $0 \leq X_t-Y_t\leq E_t$. Thus, this is an SM system with a stochastic peak power constraint on the energy that the EMU can obtain from the RES. Therefore, this section can be considered as a generalization of \cite{Gomez:2015TIFS}, where the authors consider a fixed peak power constraint.

\begin{remark}
We note that a peak power constraint other than $E_t$ can be easily incorporated to the model, as this would simply correspond to a new instantaneous power constraint of $X_t-Y_t \leq \min\{E_t,\hat{P}\}$. Therefore, for the brevity of the presentation we do not consider a peak power constraint in this section.
\end{remark}


Note that, as opposed to the infinite-capacity battery scenario, here the past has no influence on the energy constraint, since there is no battery, and thus, no memory, in the system.

To analyze this scenario, we first consider the minimum information leakage rate when the generated renewable energy is constant in every time slot, i.e., $\mathcal{E} = \{e\}$, which is known by both the EMU and the UP. The privacy-power function is obtained by considering only a peak power constraint, which can be obtained as a special case of Theorem \ref{th:average_peak}.
\begin{lemma}\label{c:ppfsingle}
If $B_{\max}=0$ and $\mathcal{E} = \{e\}$, the privacy-power function for an i.i.d. input load $X$ is given by $\mathcal{I}(e,e)$.
\end{lemma}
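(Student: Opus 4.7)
The plan is to show that the $B_{\max}=0$, $\mathcal{E}=\{e\}$ setting reduces exactly to the peak-constrained instance of Theorem~\ref{th:average_peak} with $\bar{P}=\hat{P}=e$. The key observation is that once we remove the battery and fix the renewable supply to a constant $e$, the problem becomes memoryless: each $Y_t$ is constrained only by the instantaneous inequality $0 \leq X_t - Y_t \leq e$ (the feasibility set in \eqref{eq:feasibleSetY} with $b_t=0$, $e_t=e$, $\hat{P}=e$), and in particular the state evolution \eqref{battery_constraint} is trivial. So the only thing to verify is that both the converse and achievability of Theorem~\ref{th:average_peak} specialize cleanly.

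For the converse, I would argue as follows. Any feasible policy $f$ satisfies $0 \leq X_t - Y_t \leq e$ for every $t$ almost surely, so in particular $\mathbb{E}[X_t - Y_t] \leq e$. Hence the marginal conditional distribution $p_{Y_t|X_t}$ induced by $f$ lies in the feasible set $\mathcal{P}$ of Theorem~\ref{th:average_peak} with $\bar{P}=\hat{P}=e$, giving $I(X_t;Y_t) \geq \mathcal{I}(e,e)$. Then a standard single-letterization — using the i.i.d. assumption on $\{X_t\}$ and the chain rule to write
\begin{align*}
\frac{1}{n} I(X^n;Y^n) &= \frac{1}{n}\sum_{t=1}^n \bigl[H(X_t) - H(X_t \mid X^{t-1},Y^n)\bigr] \\
&\geq \frac{1}{n}\sum_{t=1}^n \bigl[H(X_t) - H(X_t \mid Y_t)\bigr] \\
&= \frac{1}{n}\sum_{t=1}^n I(X_t;Y_t) \geq \mathcal{I}(e,e),
\end{align*}
exactly as in the proof of Lemma~\ref{lemma:BestEffortPPF} — yields $\mathcal{I}^{i}(0,e) \geq \mathcal{I}(e,e)$.

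For achievability, let $p^{*}_{Y|X}$ be an optimizer of \eqref{expr_privacy_power} for $(\bar{P},\hat{P})=(e,e)$. Since this optimizer is supported on pairs $(x,y)$ with $0 \leq x-y \leq e$, applying it independently across time slots gives a feasible memoryless policy (no battery dynamics to worry about because $B_{\max}=0$ and $E_t=e$ is deterministic). Under this i.i.d. policy, $\frac{1}{n}I(X^n;Y^n) = I(X;Y) = \mathcal{I}(e,e)$ for every $n$, matching the converse.

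I do not expect any serious obstacle here: the lemma is essentially a direct specialization of Theorem~\ref{th:average_peak}, and the mild point worth stating carefully is just why the average constraint is automatically absorbed by the peak constraint when $\bar{P}=\hat{P}=e$ (so that the feasible set $\mathcal{P}$ indeed reduces to $\{p_{Y|X}:0\leq X-Y\leq e\}$). I would also note that this lemma serves as the building block for the general i.i.d. $E_t$ analysis in the subsequent results of Section~\ref{sec:zero}, where one conditions on $E_t$ and applies this deterministic-$e$ result slice-by-slice.
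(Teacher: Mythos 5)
Your proposal is correct and matches the paper's treatment: the paper states this lemma as a direct specialization of Theorem~\ref{th:average_peak} with $\bar{P}=\hat{P}=e$ (the peak constraint $0\leq X_t-Y_t\leq e$ absorbing the average constraint), which is exactly your reduction. The converse single-letterization and the i.i.d.\ memoryless achievability you spell out are the same standard steps the paper uses for the general $B_{\max}=0$ case in Theorem~\ref{th:IT}, so there is nothing missing.
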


\subsection{Generated Renewable Energy not Known by the UP}\label{sec:BzeroNotKnown}

As in Section \ref{sec:BinfiniteNotKnown}, here the realization of the renewable energy process is assumed to be known only by the EMU, while the UP only knows the probability distribution $p_E$.

\begin{theorem}\label{th:IT}
If $B_{\max}=0$, and the renewable energy produced by the RES is i.i.d. with distribution $p_E$, the optimal information leakage rate, denoted by $\mathcal{I}^i(0)$, is given by
\begin{equation}\label{expr_zero}
\mathcal{I}^i(0) \triangleq \inf_{\substack{p_{Y|X} : p_{Y|X} = \sum_{e \in \mathcal{E}} p_{Y|X,E}(y|x,e);\\ p_{Y|X,E} \in \mathcal{P}^i}} I\left(X;Y\right),
\end{equation}
where $\mathcal{P}^i \triangleq \{p_{Y|X,E}:p_{Y|X,E}(y|x,e)=0 \text{ if } y > x \text{ or } y < x - e \}$.
\end{theorem}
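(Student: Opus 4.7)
My plan is to prove Theorem \ref{th:IT} via a standard pair of achievability and converse arguments, exploiting the fact that $B_{\max}=0$ wipes out all memory in the battery state so that the optimal policy becomes memoryless.

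For achievability, I would fix any $\varepsilon>0$, pick $p^*_{Y|X,E}\in\mathcal{P}^i$ that comes within $\varepsilon$ of the infimum in (\ref{expr_zero}), and deploy the stationary memoryless policy that at each slot $t$ samples $Y_t\sim p^*_{Y|X,E}(\cdot\mid X_t,E_t)$ using only the current pair. Since $B_t\equiv 0$ and the support constraint $y\in[x-e,x]$ is baked into $\mathcal{P}^i$, the policy satisfies constraints (\ref{eq:constraintY})--(\ref{eq:constraintXY}), so $f\in\mathcal{F}$. Because $\{(X_t,E_t)\}$ is i.i.d.\ and the policy is memoryless, the triples $(X_t,E_t,Y_t)$ are i.i.d., hence $\tfrac{1}{n}I(X^n;Y^n)=I(X_1;Y_1)$ for every $n$, which is within $\varepsilon$ of $\mathcal{I}^i(0)$; letting $\varepsilon\downarrow 0$ closes this direction.

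For the converse, given any feasible policy $f\in\mathcal{F}$, I would single-letterize using the i.i.d.\ structure of $X^n$:
\begin{equation*}
I(X^n;Y^n)=\sum_{t=1}^{n}\bigl[H(X_t)-H(X_t\mid X^{t-1},Y^n)\bigr]\geq \sum_{t=1}^{n} I(X_t;Y_t),
\end{equation*}
where the inequality holds because conditioning reduces entropy. It then suffices to lower-bound each $I(X_t;Y_t)$ by $\mathcal{I}^i(0)$. To that end, I would define $p_{Y_t\mid X_t,E_t}$ by marginalizing the policy-induced joint over the history $(X^{t-1},E^{t-1},Y^{t-1})$; pointwise feasibility of the policy forces this marginal to vanish outside $[x-e,x]$, so $p_{Y_t\mid X_t,E_t}\in\mathcal{P}^i$. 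Using independence of $E_t$ from $X_t$ (both belong to mutually independent i.i.d.\ sequences), one obtains
\begin{equation*}
p_{Y_t\mid X_t}(y\mid x)=\sum_{e\in\mathcal{E}} p_{Y_t\mid X_t,E_t}(y\mid x,e)\, p_E(e),
\end{equation*}
which exhibits $p_{Y_t\mid X_t}$ as a feasible point of the variational problem (\ref{expr_zero}). Hence $I(X_t;Y_t)\geq \mathcal{I}^i(0)$ for every $t$, so $\tfrac{1}{n}I(X^n;Y^n)\geq \mathcal{I}^i(0)$ uniformly in $n$ and $f$.

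The conceptually trickiest step is not the single-letterization but the feasibility check for the induced marginal in the converse. One must track that the zero-support property $p_{Y_t\mid X_t,E_t}(y\mid x,e)=0$ for $y\notin[x-e,x]$ holds not merely in expectation but pointwise for every realization of the history, so that it survives marginalization; and one must verify that $E_t$ remains independent of $X_t$ after the history is integrated out, which is what enables the $p_E$-weighted mixture representation. Once those bookkeeping points are handled, the rest of the argument is a standard memoryless-channel converse.
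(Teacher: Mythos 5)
Your proposal is correct and follows essentially the same route as the paper: achievability via a stationary memoryless policy that draws $Y_t$ from a (near-)optimal $p_{Y|X,E}(\cdot\mid X_t,E_t)$ at each slot, and a converse that single-letterizes $\tfrac{1}{n}I(X^n;Y^n)\geq \tfrac{1}{n}\sum_t I(X_t;Y_t)$ using the i.i.d.\ input and conditioning-reduces-entropy. The only difference is that you make explicit the feasibility bookkeeping the paper compresses into ``follows from the definition of $\mathcal{I}^i(0)$,'' namely that the history-marginalized $p_{Y_t\mid X_t,E_t}$ lies in $\mathcal{P}^i$ pointwise and that independence of $E_t$ and $X_t$ yields the $p_E$-weighted mixture for $p_{Y_t\mid X_t}$ (the form the paper's constraint in (\ref{expr_zero}) is evidently intended to have).
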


\begin{proof}
\textit{Achievability.} We consider a conditional probability distribution $p_{Y| X, E}(y|x, e)$ that satisfies the conditions of Theorem \ref{th:IT}. At each time instant, for given $x_t$ and $e_t$, $y_t$ is generated independently using the conditional distribution $p_{Y| X,E}(y_t| X_t=x_t, E_t=e_t)$. Since the input and output load sequences are generated i.i.d. with the induced joint distribution $p_X(x)p_{Y|X}(y|x)$, the information leakage rate is given by $I(X;Y)$, whereas the instantaneous peak power constraint is satisfied for all conditional distributions in $\mathcal{P}^i$.

\textit{Converse}. We assume that there is an energy management policy that satisfies the instantaneous peak power constraints, i.e., $x_t - y_t \leq e_t, \forall t$. Then, the information leakage rate satisfies the following chain of inequalities:
\begin{subequations}
\begin{align}
 &\frac{1}{n}  I(X^{n};Y^{n})  =\frac{1}{n}\left[H(X^{n})-H(X^{n}|Y^{n} )\right]  \label{B0_Eknown_Converse1} \\
 &=    \frac{1}{n} \left[ \sum_{t=1}^n H(X_t) - \sum_{t=1}^n H(X_t|X^{t-1},Y^n) \right]  \label{B0_Eknown_Converse2} \\
 &\ge 	\frac{1}{n} \left[ \sum_{t=1}^n H(X_t)-H(X_t|Y_t)\right]  \label{B0_Eknown_Converse3} \\
 &= 	\frac{1}{n} \sum_{t=1}^n I\left(X_t;Y_t\right) \ge \frac{1}{n} \sum_{t=1}^n \mathcal{I}^i(0) = \mathcal{I}^i(0), \label{B0_Eknown_Converse5}
\end{align}
\end{subequations}
where (\ref{B0_Eknown_Converse2}) follows since $X$ is i.i.d.; (\ref{B0_Eknown_Converse3}) follows since conditioning reduces entropy; and (\ref{B0_Eknown_Converse5}) follows from the definition of $\mathcal{I}^i(0)$ in (\ref{expr_zero}).
\end{proof}



\subsection{Generated Renewable Energy Known by the UP}

Here we assume the UP also knows the state $E_t$, $\forall t$.


\begin{theorem}\label{th:ITR}
If $B_{\max}=0$, the input load is i.i.d. with distribution $p_X$, and the amount of generated renewable energy is also known by the UP at each time $t$, then the optimal information leakage rate $\bar{\mathcal{I}}^i(0)$ is given by
\begin{equation}\label{eq:leakage_TR}
\bar{\mathcal{I}}^i(0) = \inf_{p_{Y|X,E} \in \mathcal{P}^i} I\left(X;Y|E \right) = \mathbbm{E}_E [\mathcal{I}(E,E)],
\end{equation}
where $\mathcal{P}^i \triangleq \{p_{Y|X,E}: p_{Y|X,E}(y|x,e)=0  \enspace \textit{if} \enspace y>x \enspace \textit{or} \enspace y<x-e\}$.
\end{theorem}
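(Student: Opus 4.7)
The statement contains two equalities. The plan is to treat them separately: first establish the single-letter characterization $\bar{\mathcal{I}}^i(0) = \inf_{p_{Y|X,E} \in \mathcal{P}^i} I(X;Y|E)$ by matching achievability and converse bounds, and then reduce the right-hand side to the explicit expectation $\mathbbm{E}_E[\mathcal{I}(E,E)]$ by exploiting the fact that the constraint set decouples across values of $E$.

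For the achievability of the single-letter expression, I would fix any $p_{Y|X,E} \in \mathcal{P}^i$ and use the stationary memoryless policy that, at each time $t$, generates $Y_t$ conditionally independently of the past from $p_{Y|X,E}(\cdot \mid X_t, E_t)$. Because the constraint in $\mathcal{P}^i$ forces $0 \leq X_t - Y_t \leq E_t$ almost surely, this policy is feasible for $B_{\max}=0$. Since $\{X_t\}$ and $\{E_t\}$ are i.i.d. and mutually independent, the triples $(X_t, Y_t, E_t)$ are i.i.d., so $\tfrac{1}{n} I(X^n;Y^n|E^n) = I(X;Y|E)$ for every $n$, and infimizing over $p_{Y|X,E}\in\mathcal{P}^i$ gives the upper bound on $\bar{\mathcal{I}}^i(0)$.

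For the converse I would mimic the chain used in the proof of Theorem~\ref{th:IT}, but conditioning on $E^n$ throughout:
\begin{align*}
\tfrac{1}{n} I(X^n;Y^n|E^n)
&= \tfrac{1}{n}\!\left[H(X^n|E^n)-H(X^n|Y^n,E^n)\right] \\
&= \tfrac{1}{n}\!\sum_{t=1}^n\!\left[H(X_t) - H(X_t|X^{t-1},Y^n,E^n)\right] \\
&\geq \tfrac{1}{n}\sum_{t=1}^n I(X_t;Y_t|E_t),
\end{align*}
where the second line uses that $X$ is i.i.d.\ and independent of $E$, and the inequality is conditioning-reduces-entropy. Any feasible policy must satisfy $0 \leq X_t - Y_t \leq E_t$ pointwise, so the per-slot kernel $p_{Y_t|X_t,E_t}$ induced by the policy lies in $\mathcal{P}^i$; hence each term is lower bounded by $\inf_{p_{Y|X,E}\in\mathcal{P}^i} I(X;Y|E)$, yielding the matching lower bound after taking $n\to\infty$ and then the infimum over feasible policies.

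For the second equality, I would observe that the constraint defining $\mathcal{P}^i$ is separable: $p_{Y|X,E}(\cdot \mid \cdot , e)$ is constrained only to have support on $\{y: x-e \leq y \leq x\}$, with no coupling across different values of $e$. Writing $I(X;Y|E) = \sum_{e} p_E(e)\, I(X;Y|E=e)$ and noting that, for each fixed $e$, minimizing over $p_{Y|X,E=e}$ is equivalent to minimizing $I(X;Y)$ over the set $\{p_{Y|X}: 0\leq X-Y \leq e\}$ (since $X$ has the same marginal under $p_X$ regardless of $E$ by independence), the per-$e$ infimum is exactly $\mathcal{I}(e,e)$ via Theorem~\ref{th:average_peak} (the peak constraint $X-Y\leq e$ automatically implies the average constraint $\mathbbm{E}[X-Y]\leq e$). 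Summing in $e$ gives $\mathbbm{E}_E[\mathcal{I}(E,E)]$. The only real subtlety, and where I would be most careful, is justifying the exchange of infimum and sum: this is clean because the feasible set $\mathcal{P}^i$ is a product over $e\in\mathcal{E}$ of the per-$e$ feasible sets, so the joint minimizer is obtained by minimizing each slice independently.
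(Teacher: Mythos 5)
Your proposal is correct and follows essentially the same route as the paper: a memoryless single-letter policy for achievability, the standard single-letterization with conditioning-reduces-entropy (using independence of $X$ and $E$) for the converse, and the per-$e$ reduction to the peak-constrained privacy-power function $\mathcal{I}(e,e)$. The only difference is organizational—the paper folds your explicit decoupling step into the converse chain by expanding over the states of $E_t$ and invoking its Lemma on the fixed-$e$ case, whereas you prove the converse to $\inf_{p_{Y|X,E}\in\mathcal{P}^i} I(X;Y|E)$ first and then establish the product-structure interchange of infimum and expectation separately—which is the same content.
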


\begin{IEEEproof} \textit{Achievability} of (\ref{eq:leakage_TR}) follows trivially by employing the optimal $p_{Y|X,E}$ that minimizes (\ref{eq:leakage_TR}) at each time slot. To prove the converse, we show that any energy management policy that satisfies the stochastic peak power constraint at each time instant satisfies the following chain of inequalities:
\begin{subequations}
\begin{align}
 \frac{1}{n} &I(X^{n};Y^{n}|E^n) \nonumber \\
 &= \frac{1}{n} \left[H(X^{n}|E^n)-H(X^{n}|Y^{n},E^n ) \right]\\  
 &=    \frac{1}{n} \Bigg[ \sum_{t=1}^n H(X_t|X^{t-1},E^{n})  - H(X_t|X^{t-1},Y^n,E^n) \Bigg]  \\ 
 &\ge 	\frac{1}{n} \left[ \sum_{t=1}^n H(X_t|E_t)-H(X_t|Y_t,E_t)\right]  \label{B0_Eknown_converse3} \\
 &= 	\frac{1}{n} \sum_{t=1}^n  \sum_{k=1}^{|\mathcal{E}|} p_{E}(E=e_k)   I\left(X_t;Y_t|E_t=e_k\right)  \label{B0_Eknown_converse4} \\
 &\ge 	\frac{1}{n} \sum_{t=1}^n  \sum_{k=1}^{|\mathcal{E}|} p_{E}(E=e_k) \mathcal{I}\left(e_k,e_k\right)  \label{B0_Eknown_converse5} \\
 &= \sum_{k=1}^{|\mathcal{E}|} p_{E}(E=e_k) \mathcal{I}\left(e_k,e_k\right) = \mathbbm{E}_E \left[ \mathcal{I}(E,E) \right],  \label{B0_Eknown_converse6}
\end{align}
\end{subequations}
where (\ref{B0_Eknown_converse3}) follows because $X$ and $E$ are independent of each other and across time, and conditioning reduces entropy; (\ref{B0_Eknown_converse4}) follows by explicitly considering all the states of $E_t$; and (\ref{B0_Eknown_converse5}) follows from Lemma \ref{c:ppfsingle}.
\end{IEEEproof}





From the chain rule of mutual information, we have
\begin{subequations}
\begin{align}
 I(X;Y,E)  &= I(X;E) + I(X;Y|E) = I(X;Y|E),  \label{B0_Eknown_chain1}\\
 I(X;Y,E)  &= I(X;Y) + I(X;E|Y)  \label{B0_Eknown_chain2},
\end{align}
\end{subequations}
where (\ref{B0_Eknown_chain1}) follows since $X$ and $E$ are independent of each other. From (\ref{B0_Eknown_chain1}) and (\ref{B0_Eknown_chain2}), we get $I(X;Y) \leq I(X;Y|E)$. Hence, from Theorems \ref{th:IT} and \ref{th:ITR}, we have $\mathcal{I}^{i}(0) \leq \bar{\mathcal{I}}^{i}(0)$, as expected.


\section{Binary Scenario} \label{sec:binary}

In order to provide further insights into the behavior of the information leakage rate, here we consider a simple scenario with binary energy demands, binary energy generation and binary output load, i.e., $\mathcal{X}= \mathcal{E} = \mathcal{Y}=\{0,1\}$. This scenario may represent appliances that are either on or off/standby. $X$ and $E$ follow independent Bernoulli distributions with $\Pr\{X=1\}=q_x$ and $\Pr\{E=1\} = p_e$, respectively. We compare the minimum information leakage rates for the infinite and zero battery scenarios.

If $B_{\max}= \infty$, the minimum information leakage rate can be characterized explicitly as
\begin{multline}\label{eq:binaryBinfinite}
\mathcal{I}^{i}(\infty,1) = \mathcal{I}(p_e, 1) =\\
\begin{cases}
    p_e \log p_e - q_x \log q_x \\ \quad- (1 - q_x+ p_e) \times \log (1- q_x + p_e), & \text{if } p_e \leq q_x,\\
    0,              & \text{otherwise},
\end{cases}
\end{multline}
where we set the peak power constraint to $\hat{P}=1$.

When $B_{\max}=0$, there are two scenarios. If the generated renewable energy is known only by the EMU, the minimum information leakage rate for this scenario is given by
\begin{equation}
\mathcal{I}^{i}(0;p_e,p_v,q_x) = h(1 - q_x + q_x p_e p_v ) - q_x  h(p_e p_v) \label{B0_Ekonwn_binary2} ,
\end{equation}
where $h(\cdot)$ is the binary entropy function defined as $h(p) \triangleq - p \log p - (1-p) \log(1-p)$, $q_x$ is fixed, and $p_v$ is the probability of using the energy available in the battery whenever $X=1$ and $E=1$.
\begin{proposition}
For every $p_e$ and $q_x$, the information leakage rate $\mathcal{I}^{i}(0;p_e,p_v,q_x)$ is minimized with $p_v=1$.
\end{proposition}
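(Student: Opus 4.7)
The plan is a one-variable calculus argument: I will show that $\mathcal{I}^{i}(0;p_e,p_v,q_x)$, viewed as a function of $p_v\in[0,1]$ with $p_e,q_x$ held fixed, is non-increasing, so its minimum is attained at $p_v=1$.

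First I would simplify notation by setting $u\triangleq p_ep_v\in[0,p_e]$, so the expression becomes $F(u)\triangleq h(1-q_x+q_xu)-q_xh(u)$, and $\frac{du}{dp_v}=p_e\ge 0$. Differentiating with respect to $p_v$ using the chain rule,
\begin{equation*}
\frac{d}{dp_v}\mathcal{I}^{i}(0;p_e,p_v,q_x)=q_xp_e\bigl[h'(1-q_x+q_xu)-h'(u)\bigr],
\end{equation*}
where $h'(p)=\log\tfrac{1-p}{p}$ is strictly decreasing on $(0,1)$.

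Next I would compare the two arguments of $h'$. A direct computation gives $(1-q_x+q_xu)-u=(1-q_x)(1-u)\ge 0$, so $1-q_x+q_xu\ge u$. Since $h'$ is decreasing, this yields $h'(1-q_x+q_xu)-h'(u)\le 0$, and hence $\frac{d}{dp_v}\mathcal{I}^{i}(0;p_e,p_v,q_x)\le 0$ for all $p_v\in[0,1]$. Therefore the leakage is non-increasing in $p_v$ and is minimized at $p_v=1$, as claimed.

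There is no real obstacle here; the only subtlety is making sure the monotonicity of $h'$ is applied over a valid range and handling the boundary cases $u=0$ and $u=1-q_x+q_xu=1$, which occur only in degenerate parameter regimes ($p_e=0$ or $q_x=0$) and trivially satisfy the inequality. The underlying intuition is that increasing $p_v$ makes $Y$ deviate from $X$ more often whenever renewable energy is available, which can only reduce the mutual information leakage.
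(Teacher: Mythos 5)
Your proof is correct and follows essentially the same route as the paper: the paper's argument is precisely the observation that $\frac{\mathrm{d}\mathcal{I}^{i}(0;p_e,p_v,q_x)}{\mathrm{d}p_v} \leq 0$ for all $p_e, q_x$, so the minimum is attained at $p_v=1$. You simply make explicit the derivative computation (via $h'(p)=\log\frac{1-p}{p}$ decreasing and $(1-q_x+q_xu)-u=(1-q_x)(1-u)\geq 0$) that the paper leaves implicit, which is a welcome level of detail but not a different approach.
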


\begin{IEEEproof}
The proof follows from observing that $\frac{\mathrm{d}\mathcal{I}^i(0,p_v)}{\mathrm{d}p_v} \leq 0, \forall p_e, q_x$. Thus, the minimum of $\mathcal{I}^{i}(0;p_e,p_v,q_x)$ is reached when $p_v$ takes its maximum value, i.e., $p_v=1$.
\end{IEEEproof}


When $E_t$ is known also by the UP, if the peak power constraint is $e=1$, no information is leaked, whereas if $e=0$, the input load is known perfectly by the UP, leading to a leakage of $H(X)$. Hence, the minimum information leakage rate when the state information is known by the UP is
\begin{equation}
\bar{\mathcal{I}}^i(0;p_e,q_x)=(1-p_e) h(q_x).
\end{equation}

Numerical comparison of the information leakage rate for zero and infinite battery capacities in the binary scenario will be presented in the next section together with the results corresponding to a finite battery capacity.

\section{Finite Battery Capacity} \label{sec:finite}

A closed-form expression for the finite-capacity battery scenario is elusive as the presence of a finite battery brings memory into the system, and the future energy usage depends on how much renewable energy has been generated in the previous time slots, how much of that energy has already been used by the EMU, and how much is available in the RB. Instead, we propose a low-complexity energy management policy and compare it to the two previous scenarios, which represent upper and lower bounds on the system performance for the finite battery scenario.



\subsection{Binary Alphabet: \texorpdfstring{$\mathcal{X}=\mathcal{E}=\mathcal{Y}=\{0,1\}$}{Binary}}\label{sec:binaryFinite}

In this setting $X$, $E$ and $Y$ have binary alphabets and we consider a discrete-time system, modeled via a finite state machine. As in Section \ref{sec:binary},  we set $\Pr\{X=1\} = q_x$ and $\Pr\{E=1\} = p_e$, while $V^n \triangleq X^n-Y^n$ represents the energy taken by the EMU from the battery, with $\mathcal{V}=\{0,1\}$.

\begin{figure}[!t]
\centering
\includegraphics[width=1\columnwidth]{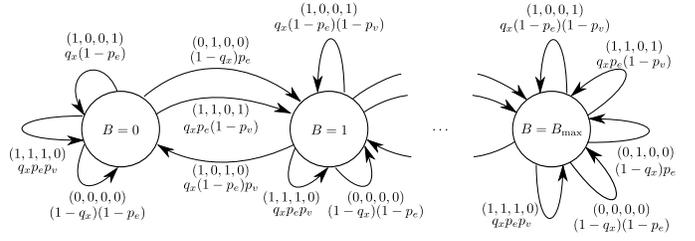}
\caption{Finite state diagram for the evolution of the battery with $\mathcal{B}=\{0,1,\ldots B_{\max}\}$ and $\mathcal{X}=\mathcal{E}=\mathcal{Y}=\{0,1\}$ for the battery-independent policy. The $4$-tuple $(x,e,v,y)$ represent for every time $t$ the values of the input load, the renewable energy produced, the energy taken out of the battery by the EMU, and the output load, respectively.}
\label{fig:unitSizeModelMulti}
\end{figure}


\subsubsection{Battery-independent Policy}\label{sec:batteryInvariantPolicy}

Here we consider a time-invariant policy according to which the evolution of the battery state can be modeled as the Markov chain of Figure \ref{fig:unitSizeModelMulti}, where the $4$-tuples $(x,e,v,y)$ represent the realization at time $t$ of the input load $X$, the renewable energy $E$, the energy taken from the battery by the EMU $V$, and the output load $Y$, respectively. At every time, the RB can be charged, discharged or remain in the current SOC, depending on the transition probabilities. We note that a similar model has been adopted in \cite{Tan:2013JSAC}, with the difference that in \cite{Tan:2013JSAC} the RB can also store energy from the grid. We define $p_{v}$ as the probability that the energy is taken from the battery provided that the user is asking for energy and that there is energy available for use, i.e., $p_v\triangleq\Pr\{V=1\big|X=1,E+B\geq1\}$. Since the value of $p_v$ does not change according to the current battery state, we name this policy \textit{battery-independent policy}. Table \ref{tab:summaryTransFinite} lists all the possible states and transition probabilities for this scenario. In particular, the table shows for each transition from $B_t$ to $B_{t+1}$ and each combination of the tuple $(X_t,E_t,V_t,Y_t)$ the corresponding transition probability.

To compute the information leakage rate, all the distributions are considered to be Bernoulli. For $B_{\max}=\infty$ and $B_{\max}=0$ we use the single-letter expressions derived in Section \ref{sec:binary}, and set $\hat{P}=1$ for $B_{\max}=\infty$. For a finite-capacity battery, we implement the achievable scheme described above, and by means of the algorithm in \cite{Arnold:2006} we simulate the system for very long sequences and evaluate the information leakage between the input and the output loads numerically and for different battery capacities. Moreover, for each $p_e$, we find the value of $p_v$ that achieves the minimum information leakage rate by searching over a discretized set of $p_v$ values. As an example, Figure \ref{fig:B_1optimalPVBinary} represents the optimal $p_v$ values for each $p_e$, when the input load is uniformly distributed and $B_{\max}=\{1,2,5,10\}$. In the figure, $p_e=0$ is not represented because, regardless of $p_v$, the leakage when $p_e=0$ is always equal to the entropy of the input load. Also, the figure shows that for higher $p_e$ values, the minimum leakage is achieved for $p_v=1$, i.e., it is better to always use the energy when available.


\begin{table}[!t]
\caption{Tuples and transition probabilities for the battery-independent policy when $\mathcal{X}=\mathcal{E}=\mathcal{Y}=\{0,1\}$.}
\centering
\resizebox{\columnwidth}{!}{
\begin{tabular}{ |c|c|c|c|c|c|c| }
\hline
$\mathbf{B_t}$ & $\mathbf{X_t}$ & $\mathbf{E_t}$ & $\mathbf{V_t}$ & $\mathbf{Y_t}$ & $\mathbf{B_{t+1}}$ & \textbf{Transition Probability}\\
\hline
\multirow{5}{*}{$B_t=0$} & $0$ & $0$ & $0$ & $0$ & $0$ & $(1-q_x)(1-p_e)$\\
\cline{2-7}
& $0$ & $1$ & $0$ & $0$ & $1$ & $(1-q_x)p_e$ \\
\cline{2-7}
& $1$ & $0$ & $0$ & $1$ & $0$ & $q_x(1-p_e)$ \\
\cline{2-7}
& $1$ & $1$ & $0$ & $1$ & $1$ & $q_x p_e (1-p_v)$ \\
\cline{2-7}
& $1$ & $1$ & $1$ & $0$ & $0$ & $q_x p_e p_v$ \\
\hline
\hline
\multirow{6}{*}{$0< B_t \leq B_{\max}$} & $0$ & $0$ & $0$ & $0$ & $B_t$ & $(1-q_x)(1-p_e)$ \\
\cline{2-7}
& $0$ & $1$ & $0$ & $0$ & $\min\{B_t+1,B_{\max}\}$ & $(1-q_x)p_e$\\
\cline{2-7}
& $1$ & $0$ & $0$ & $1$ & $B_t$ & $q_x(1-p_e)(1-p_v)$ \\
\cline{2-7}
& $1$ & $0$ & $1$ & $0$ & $B_t-1$ & $q_x (1-p_e) p_v$ \\
\cline{2-7}
& $1$ & $1$ & $0$ & $1$ & $\min\{B_t+1,B_{\max}\}$ & $q_x p_e (1-p_v)$ \\
\cline{2-7}
& $1$ & $1$ & $1$ & $0$ & $B_t$ & $q_x p_e p_v$ \\
\hline
\end{tabular}}
\label{tab:summaryTransFinite}
\end{table}

\begin{figure}[!t]
\centering
\includegraphics[width=0.8\columnwidth]{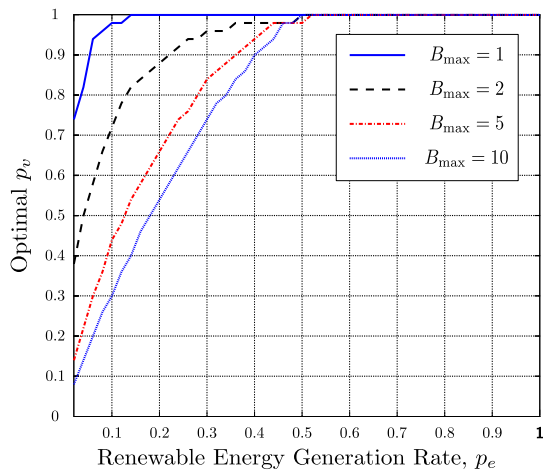}
\caption{Optimal $p_v$ for the binary scenario and various battery capacities, when $q_x = 0.5$.}
\label{fig:B_1optimalPVBinary}
\end{figure}


\subsubsection{Battery-conditioned Policy}\label{sec:batteryConditioned}

Here we consider a policy, in which $p_v$, as defined before, can differ for different battery SOCs, i.e., the policy is characterized by a specific $p_{v_i}$ for each battery SOC $B_t=i$, for $i=\{0,\ldots,B_{\max}\}$. Thus, we now have the vector
\begin{equation}
\bar{p}_v=[p_{v_0},p_{v_1}, \ldots, p_{v_{B_{\max}}}].
\end{equation}

To find the optimal $\bar{p}_v$ for each $p_e$ and $B_{\max}$ we deploy a stochastic gradient descent algorithm, specifically we use the least square-based finite difference method to approximate the gradient \cite{Deisenroth:2013}. Briefly, the algorithm works as follows. At any step, small perturbations are applied to each $p_{v_i}$ according to a uniform distribution over a predefined interval, and the leakage corresponding to the resulting perturbed vector $\bar{p}_v$ is computed. The gradient of the leakage function can thus be approximated numerically by employing the leakage corresponding to a number of different perturbations. A new $\bar{p}_v$ is finally computed using the gradient estimate and a predefined learning rate, and its corresponding leakage is determined and compared with that of the previous step. If the difference between the two leakage rates is below a certain threshold, the algorithm stops. Otherwise, the algorithm keeps on iterating.

Figure \ref{fig:comparison} shows the information leakage rate with respect to the renewable energy generation rate $p_e$, for different battery capacities. For $B_{\max}=\{1,2,5,10\}$, we adopt the battery-conditioned policy, which has only a small gain with respect to the battery-independent policy. In particular, this gain is focused around smaller $p_e$ values. As expected, the least information leakage rate is achieved when $B_{\max}=\infty$ and $\hat{P}=1$, while the maximum leakage occurs when $B_{\max}=0$ and the UP knows the renewable energy process realizations. When $B_{\max}=0$ the information leakage rate reduces significantly if the state is not known by the UP and, more interestingly, we observe that the performance of the proposed suboptimal memoryless scheme approaches that of the infinite-capacity battery with relatively small battery sizes. In addition, we can see that the gain from the battery is much higher when the renewable generation rate is higher, i.e., when $p_e$ is high. This is expected because when $p_e$ is low, there is less energy to be stored for future time slots.

\begin{figure}[!t]
\centering
\includegraphics[width=1\columnwidth]{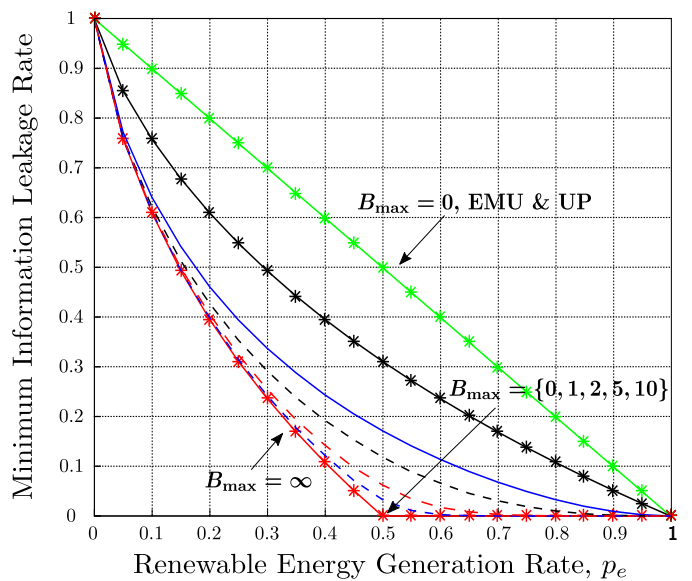}
\caption{Minimum information leakage with respect to the renewable energy generation rate $p_e$ with $\mathcal{X}=\mathcal{E}=\mathcal{Y}=\{0,1\}$ for the battery-conditioned policy. As $B_{\max}$ increases, the performance rapidly approaches that of $B_{\max}=\infty$ and $\hat{P}=1$.}
\label{fig:comparison}
\end{figure}

\subsection{Larger Alphabets: \texorpdfstring{$|\mathcal{X}|=|\mathcal{Y}|=|\mathcal{E}|>2$}{GeneralCase}}

Here we consider larger alphabets for $X$, $E$ and $Y$. As the alphabet sizes grow, so does the complexity of searching for the optimal policy. Instead, we consider the following suboptimal policy. At each time instant, the policy chooses among using all of the available energy, half of it, or no energy at all and we model the probability $p_v$ as in the following:
\begin{equation}\label{eq:transProbModerate}
p_v(B_t+E_t,X_t)=
\begin{cases}
(p_1,p_4),  &\text{if } B_t + E_t < X_t, \\
(p_2,p_5),   & \text{if } B_t + E_t = X_t, \\
(p_3,p_6),    & \text{if } B_t + E_t > X_t.
\end{cases}
\end{equation}

The probability pairs in (\ref{eq:transProbModerate}) refer to the probability of using all the available energy and the probability of using half of it. Therefore, we have $0 \leq p_i \leq 1$, for $i=1,\ldots, 6$, and $p_i+p_{i+3}\leq 1$, for $i=1,2,3$. For example, if $B_t + E_t < X_t$, all of the available energy is used with probability $p_1$, half of it, or the nearest integer value lower than that, is used with probability $p_4$, and none of it is used with probability $1-p_1-p_4$. 

Figure \ref{fig:comparisonFinite} shows the results for the scenario for $|\mathcal{X}|=|\mathcal{E}|=|\mathcal{Y}|=5$ when $B_{\max}=\{0,1,2,\infty\}$. The input load is uniformly distributed over the alphabet $\mathcal{X}$, while the renewable energy generation follows a binomial distribution with parameters $|\mathcal{X}|$ and $p_e$. 
The information leakage rate for the infinite and zero battery scenarios is computed by using the single-letter expressions which are evaluated by efficient numerical algorithms, specifically the BA algorithm \cite{Blahut:1972} and the CVX package \cite{cvx}. In particular, for $B_{\max}=\infty$ we set $\hat{P}=X_{\max}$. For the finite battery scenario, we adopt the aforementioned policy and optimize the performance by trying different combinations of the probabilities $p_i$, $1 \leq i \leq 6$. Similar considerations to that of Figure \ref{fig:comparison} can be drawn for Figure \ref{fig:comparisonFinite} as well.

\begin{figure}[!t]
	\centering
	\includegraphics[width=1\columnwidth]{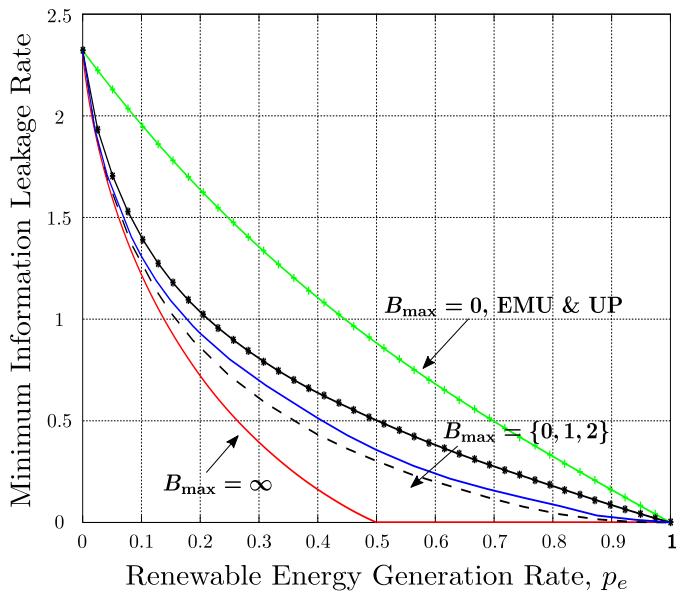}
	\caption{Minimum information leakage with respect to the renewable energy generation rate $p_e$ with $\mathcal{X}=\mathcal{E}=\mathcal{Y}=\{0,1,2,3,4\}$. The leakage for $B_{\max}=\infty$ has been found by setting $\hat{P}=4$.}
	\label{fig:comparisonFinite}
\end{figure}

\begin{remark} We remark here that, in order to isolate the privacy benefits of RESs, we do not allow charging the battery directly from the grid, which can potentially reduce the information leakage. It is known that modulating grid energy intake by employing a storage device provides privacy even in the absence of an RES \cite{Varodayan:2011,Li:2016Arxiv}, or jointly with an RES \cite{Giaconi:2016}. The additional privacy benefits  of allowing charging of the RB from the grid will depend on the battery capacity. When $B_{\max}=\infty$, perfect privacy can be achieved by charging the battery initially, and using the battery throughout the operation. In the other extreme scenario, that is, when $B_{\max}=0$, obviously it is not possible to charge a non-existent battery from the grid. We leave a more detailed study of a finite-capacity storage device that can be charged by both the RES and the grid as a future work. 
\end{remark}

\section{Continuous Input Loads} \label{sec:continuous}

In the simulation results presented above, we have considered discrete alphabets for all the involved random variables. A set of fixed discrete values for the energy demands may not be an accurate model for all the appliances in the real world. However, as discussed in Section \ref{sec:SystemModel}, such hypothesis enables to constrain the output alphabet to the input alphabet without loss of optimality and to apply efficient algorithms to find the minimum amount of information leakage.

For continuous input loads, the optimal alphabet is also continuous. Thus, low-complexity numerical algorithms, such as the BA algorithm, cannot be applied. However, one can provide a lower bound on the privacy-power function by using the Shannon lower bound (SLB) \cite{Cover:1991,Berger:1971}, which has been introduced by Shannon, and widely used in the literature to provide a computable lower bound to the rate-distortion function. Although it is not always a tight bound, it is shown in \cite{Gomez:2015TIFS} that the SLB provides a tight bound for the information leakage rate for an exponentially distributed input load.
The SLB for the rate distortion function $R(D)$ is defined as $H(X)-\phi(D)$ where $\phi(D)= \max_{\substack{p: \sum_{i=1}^{m} p_i d_i \leq D}} H(p)$. The truncated exponential distribution maximises the entropy for a given mean value $\bar{P}$ and a peak power constraint $0 \leq X \leq \hat{P}$ \cite{Cover:1991} and has the form \cite{Gomez:2013ISIT}
\begin{equation}
    f_X(x)=
\begin{cases}
    \frac{1}{\lambda_0} e^{-\frac{x}{\lambda_1}},& \text{if } 0 \leq x \leq \hat{P},\\
    0,              & \text{otherwise},
\end{cases}
\end{equation}
where $\lambda_0 \geq 0$ and $\lambda_1 \geq 0$ are chosen to satisfy the constraints on the moments.
Thus, the SLB for the privacy-power function introduced in Theorem \ref{th:average_peak} is given by
\begin{equation}
\mathcal{I}_{SLB}(\bar{P},\hat{P}) = h(X) - \frac{1}{\ln{2}}\bigg( \log(\lambda_0) + \frac{\bar{P}}{\lambda_1}\bigg).
\end{equation}

Authors in \cite{Gomez:2013ISIT} show that the SLB is indeed achievable for peak and average power constraints, by finding the conditional distribution $f_{Y|X}(y|x)$ that satisfies the SLB with equality, provided that the energy coming from the battery $X-Y$ is distributed according to a truncated exponential distribution with mean $\bar{P}$ and peak $\hat{P}$.

Authors in \cite{Gomez:2015TIFS} provide the SLB for the average power constraint, which, as we have shown, is equivalent to the infinite-capacity battery scenario.



\subsection{No Battery -  Renewable Energy not Known by the UP}
Here only a peak power constraint is considered, i.e., $X-Y$ is constrained by $0 \leq X-Y \leq \hat{P}$. The distribution that maximises the entropy over an interval is the uniform distribution
\begin{align} \label{eq:SLBzero}
    f_X(x)=
\begin{cases}
    \frac{1}{\hat{P}}, & \text{if } 0 \leq x \leq \hat{P},\\
    0,              & \text{otherwise}.
\end{cases}
\end{align}

For a fixed $\hat{P}$, the differential entropy of this distribution is $\log(\hat{P})$. Then, the SLB in the case of zero capacity battery is
\begin{equation}
\mathcal{I}_{SLB}(\hat{P}) = h(X) - \log(\hat{P}),
\end{equation}
where $\hat{P}$ is a RV with a certain known distribution.

\subsection{No Battery - Renewable Energy Known by the UP}
As in the previous scenario, only peak power constraints are considered and thus the entropy maximising distribution is still the uniform distribution (\ref{eq:SLBzero}). The privacy-power function is given by the expected value over the distribution of the states of the privacy-power function related to every state. Hence, the SLB is
\begin{equation}
\mathcal{I}_{SLB}(\hat{P}) = \mathbbm{E}_{\hat{P}}[h(X) - \log(\hat{P})].
\end{equation}

\section{Conclusions} \label{sec:conclusion}

We have studied information leakage in an SM system by considering an RES along with an RB. For infinite and zero battery capacities, we have provided single-letter information theoretic expressions for the minimum information leakage rate, which can be efficiently evaluated when the input load has a discrete alphabet. For these scenarios, we have also studied the information leakage rate when the UP knows the exact amount of renewable energy generated in each time slot. In addition, for the finite-capacity battery scenario, we have proposed a suboptimal low-complexity energy management policy, and evaluated the corresponding privacy performance using a stochastic gradient descent algorithm. Our results show that the privacy achieved by the proposed low-complexity policy approaches the theoretical lower bound obtained by assuming an infinite-capacity battery with a relatively small battery capacity, especially when the generation rate of the RES is low or high.


\begin{appendices}

\section{Proof of Lemma \ref{lemma:storeAndHide}}\label{ap:storeAndHide} 

\begin{proof}
During the hiding phase, the random variable $Q=E-X+Y^*$ is i.i.d., as $E$ and $X$ are i.i.d and $Y^*$ is generated from $X$ through a memoryless policy. $Q$ can assume both positive and negative values with positive probability. The stochastic process
\begin{equation}
S_t = Q_1 + Q_2 + \ldots Q_t, \quad \forall t,
\end{equation}
is a random walk based on $Q$ that moves along the battery SOC axis. Since by hypothesis $\mathbbm{E}[E]=\bar{P}_E > \mathbbm{E}[X-Y^*]$, then $\mathbbm{E}[Q]=\mathbbm{E}[E-X+Y^*]>0$, meaning that the random walk $S_t$ has a positive drift, i.e., as $t \rightarrow \infty$, $S_t$ drift towards the positive values of the SOC axis.

By the law of large numbers, when $s(n) \rightarrow \infty$ the amount of energy stored in the battery at the end of the \emph{storage phase} is $s(n)\bar{P}_E$, almost surely. Let $\alpha\triangleq -s(n)\bar{P}_E$. When $s(n) \rightarrow \infty$, $\alpha \rightarrow -\infty$. At $s(n)+1$, when the \emph{hiding phase} begins, the energy in the battery is used according to the optimal privacy-preserving policy $p^*_{Y|X}$ and the random walk state is $S_1=Q_1=E_1-X_1+Y^*_1$. For any $t$, $s(n)\bar{P}_E+S_t$ represents the battery SOC at time $t$. Our objective is to prove that the battery is never emptied, i.e., that the probability of crossing the threshold $\alpha$ for any time $t$ is zero:
\begin{equation}
\Pr\{S_t \leq \alpha\}=0, \quad \forall t.
\end{equation}

\begin{figure}[!t]
	\centering
	\includegraphics[width=1\columnwidth]{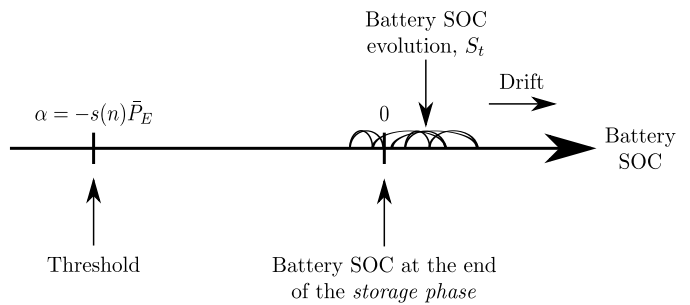}
	\caption{The battery SOC evolution is represented by a random walk that starts at the beginning of the hiding phase and has a drift towards the positive direction of the battery SOC axis. We want to guarantee that the threshold $\alpha$ is never crossed by the random walk.}
	\label{fig:randomWalk}
\end{figure}

This scenario is represented in Figure \ref{fig:randomWalk}. We recall a corollary of Wald's Identity \cite[Chapter 7.5, Corollary 2]{Gallager:1996}, which is applied to find exponential bounds on the probability of threshold crossing. In particular, the corollary states that if we consider $Q$ as having a finite moment-generating function $\gamma(r)=\ln\{\mathbbm{E}[\exp(rQ)]\}$ over an interval $(r_{-},r_{+})$, a negative drift $\mathbbm{E}[Q]<0$ and $r^*$ being the positive root of $\gamma(r)$, then the probability of crossing threshold $\alpha>0$ by the random walk $S_t=Q_1+Q_2+\ldots +Q_t$ is
\begin{equation}\label{eq:Waldidentity}
\Pr\{S_{\tau} \geq \alpha \} \leq \exp(-r^* \alpha),
\end{equation}
where $\tau$ is the minimum $t$ for which the threshold $\alpha$ is crossed. Having a finite moment generating function means that $Q$ must have moments of all orders and the tails of its distribution function must decay at least exponentially in $q$ as $q \rightarrow \infty$ and $q \rightarrow -\infty$. In our specific setting, $\mathbbm{E}[Q]>0$, $\alpha<0$, and $r^*<0$. We can still apply Wald's identity by changing the signs of $r^*$ and $\alpha$ and by considering the probability of crossing a negative threshold. Thus, we have
\begin{equation}
\Pr\{S_{\tau} \leq \alpha \} \leq \exp(-r^* \alpha),
\end{equation}
where $\alpha<0$ and $r^*<0$. When $\lim_{n \rightarrow \infty} n-s(n) = \infty$ and $\lim_{n \rightarrow \infty} s(n) = \infty$, $\alpha \rightarrow -\infty$ and $\exp(-r^* \alpha) \rightarrow 0$. Thus, we obtain
\begin{equation}
\lim_{n \rightarrow \infty} \Pr\{S_{\tau} \leq \alpha \} = 0.
\end{equation}
\end{proof}

\section{Proof of Lemma \ref{lemma:storeAndHidePPF}}\label{ap:storeAndHidePPF}

\begin{proof}
Split the sequence of input and output symbols into the storage and hiding phases of duration $s(n)$ and $n-s(n)$, respectively and let $s(n) = o(n)$. Then, it is possible to write
\begin{subequations}
\begin{align}
\frac{1}{n} &I(X^n;Y^n) = \frac{1} {n} \Bigg[ \sum_{i=1}^n H(X_i|X^{i-1})-H(X_i|X^{i-1},Y^n)\Bigg] \\ 
&\geq \frac{1} {n} \Bigg[  \sum_{i=1}^n H(X_i) - H(X_i|Y_i), \label{eq:proofSAH2}\Bigg]\\
&= \frac{1}{n} \Bigg[ \sum_{i=1}^{s(n)} I(X_i;Y_i) + \sum_{i=s(n)+1}^{n} I(X_i;Y_i)  \Bigg]\label{eq:proofSAH3}\\
&= \frac{1}{n} \Big\{ s(n) H(X) + [n-s(n)] \mathcal{I}^i(\infty,\hat{P}) \Big\}  \label{eq:proofSAH4}\\
&= \frac{s(n)H(X)}{n} + \frac{[n-s(n)]\mathcal{I}^i(\infty,\hat{P})}{n},  \label{eq:proofSAH5}
\end{align}
\end{subequations}
where (\ref{eq:proofSAH2}) follows because $X$ is i.i.d. and conditioning reduces entropy; (\ref{eq:proofSAH4}) follows since in the first $s(n)$ time instants leakage of full information $H(X)$ takes place, while in the following $n-s(n)$ time slots private operation is assured via the optimal strategy of Theorem \ref{th:Binfinite_peak}.

If we take the limit $n \rightarrow \infty$, since $s(n) = o(n)$ and $H(X)$ is finite, we obtain the leakage rate
\begin{equation}
\lim_{n \rightarrow \infty} \frac{s(n)}{n}H(X) + \frac{n-s(n)}{n}\mathcal{I}^i(\infty,\hat{P}) = \mathcal{I}^i(\infty,\hat{P}).
\end{equation}
\end{proof}

\end{appendices}


\bibliographystyle{IEEEtran}
\bibliography{Ref}

\begin{thebibliography}{10}
\providecommand{\url}[1]{#1}
\csname url@samestyle\endcsname
\providecommand{\newblock}{\relax}
\providecommand{\bibinfo}[2]{#2}
\providecommand{\BIBentrySTDinterwordspacing}{\spaceskip=0pt\relax}
\providecommand{\BIBentryALTinterwordstretchfactor}{4}
\providecommand{\BIBentryALTinterwordspacing}{\spaceskip=\fontdimen2\font plus
\BIBentryALTinterwordstretchfactor\fontdimen3\font minus
  \fontdimen4\font\relax}
\providecommand{\BIBforeignlanguage}[2]{{%
\expandafter\ifx\csname l@#1\endcsname\relax
\typeout{** WARNING: IEEEtran.bst: No hyphenation pattern has been}%
\typeout{** loaded for the language `#1'. Using the pattern for}%
\typeout{** the default language instead.}%
\else
\language=\csname l@#1\endcsname
\fi
#2}}
\providecommand{\BIBdecl}{\relax}
\BIBdecl

\bibitem{Giaconi:2015}
G.~Giaconi, D.~{G{\"u}nd{\"u}z}, and H.~V. Poor, ``Smart meter privacy with an
  energy harvesting device and instantaneous power constraints,'' in
  \emph{Proc. IEEE Int. Conf. on Commun.}, London, UK, Jun. 2015, pp.
  7216--7221.

\bibitem{Mo:2012}
Y.~Mo, T.-H. Kim, K.~Brancik, D.~Dickinson, H.~Lee, A.~Perrig, and B.~Sinopoli,
  ``Cyber-physical security of a smart grid infrastructure,'' \emph{Proc.
  IEEE}, vol. 100, no.~1, pp. 195--209, Jan. 2012.

\bibitem{smartEnergyGB}
\BIBentryALTinterwordspacing
{Smart Energy GB}. Using a smart meter. [Online]. Available:
  \url{https://www.smartenergygb.org/en/faqs?category=using-a-smart-meter}
\BIBentrySTDinterwordspacing

\bibitem{SMT}
\BIBentryALTinterwordspacing
{Smart Meter Texas}. About us. [Online]. Available:
  \url{https://www.smartmetertexas.com/CAP/public/home/home_about_us.html}
\BIBentrySTDinterwordspacing

\bibitem{Segovia:2011}
\BIBentryALTinterwordspacing
M.~S. R.~Segovia. (2011) Set of common functional requirements of the smart
  meter". [Online]. Available:
  \url{https://ec.europa.eu/energy/sites/ener/files/documents/2011\_10\_smart\_meter\_funtionalities\_report.pdf}
\BIBentrySTDinterwordspacing

\bibitem{EurDir:2009}
{European Union}, ``Directive 2009/72/{EC} of the {European parliament} and of
  the council of 13 {July} 2009 concerning common rules for the internal market
  in electricity and repealing directive 2003/54/{EC},'' \emph{Official J.
  European Union}, vol.~52, no. L211, p. 55–93, Aug. 2009.

\bibitem{Rouf:2012}
I.~Rouf, H.~Mustafa, M.~Xu, W.~Xu, R.~Miller, and M.~Gruteser, ``Neighborhood
  watch: Security and privacy analysis of automatic meter reading systems,'' in
  \emph{Proc. ACM Conf. on Comput. and Commun. Security}, Raleigh, NC, USA,
  Oct. 2012, pp. 462--473.

\bibitem{Hart:1992}
G.~Hart, ``Nonintrusive appliance load monitoring,'' \emph{Proc. IEEE},
  vol.~80, no.~12, pp. 1870--1891, Dec. 1992.

\bibitem{Molina:2010}
A.~Molina-Markham, P.~Shenoy, K.~Fu, E.~Cecchet, and D.~Irwin, ``Private
  memoirs of a smart meter,'' in \emph{Proc. ACM Workshop on Embedded Sensing
  Systems for Energy-Efficiency in Building}, Zurich, Switzerland, Nov. 2010,
  pp. 61--66.

\bibitem{Shaloudegi:2016}
K.~Shaloudegi, A.~Gy{\"{o}}rgy, C.~Szepesv{\'{a}}ri, and W.~Xu, ``{SDP}
  relaxation with randomized rounding for energy disaggregation,''
  \emph{arXiv:1610.09491 [cs.LG]}, Oct. 2016.

\bibitem{Kim:2011}
Y.~Kim, E.~Ngai, and M.~Srivastava, ``Cooperative state estimation for
  preserving privacy of user behaviors in smart grid,'' in \emph{Proc. IEEE
  Int. Conf. on Smart Grid Commun.}, Brussels, Belgium, Oct. 2011, pp.
  178--183.

\bibitem{Bohli:2010}
J.-M. Bohli, C.~Sorge, and O.~Ugus, ``A privacy model for smart metering,'' in
  \emph{Proc. IEEE Int. Conf. on Commun.}, Cape Town, South Africa, May 2010,
  pp. 1--5.

\bibitem{Dwork:2006}
C.~Dwork, ``Differential privacy,'' in \emph{Proc. Int. Colloquium Automata,
  Languages and Programming}, Venice, Italy, Jul. 2006, pp. 1--12.

\bibitem{Backes:2014}
M.~Backes and S.~Meiser, ``Differentially private smart metering with battery
  recharging,'' in \emph{Proc. Int. Workshop Data Privacy Management and
  Autonomous Spontaneous Security}, Egham, UK, Sep. 2014, pp. 194--212.

\bibitem{Sankar:2013TSG}
L.~Sankar, S.~Rajagopalan, S.~Mohajer, and H.~V. Poor, ``Smart meter privacy: A
  theoretical framework,'' \emph{{IEEE} Trans. Smart Grid}, vol.~4, no.~2, pp.
  837--846, Jun. 2013.

\bibitem{Garcia:2010}
F.~D. Garcia and B.~Jacobs, ``Privacy-friendly energy-metering via homomorphic
  encryption,'' in \emph{Proc. Int. Conf. on Security and Trust Manage.},
  Athens, Greece, Sep. 2010, pp. 226--238.

\bibitem{Li:2011}
F.~Li, B.~Luo, and P.~Liu, ``Secure and privacy-preserving information
  aggregation for smart grids,'' \emph{Int. J. of Security and Networks},
  vol.~6, no.~1, pp. 28--39, Apr. 2011.

\bibitem{Petrlic:2010}
R.~Petrlic, ``A privacy-preserving concept for smart grids,'' in \emph{Proc.
  Sicherheit in vernetzten {Systemen}:18. DFN Workshop}, 2010, p. B1–B14.

\bibitem{Efthymiou:2010SGC}
C.~Efthymiou and G.~Kalogridis, ``Smart grid privacy via anonymization of smart
  metering data,'' in \emph{Proc. IEEE Int. Conf. on Smart Grid Commun.},
  Gaithersburg, MD, USA, Oct. 2010, pp. 238--243.

\bibitem{Kalogridis:2010SGC}
G.~Kalogridis, C.~Efthymiou, S.~Denic, T.~Lewis, and R.~Cepeda, ``Privacy for
  smart meters: Towards undetectable appliance load signatures,'' in
  \emph{Proc. IEEE Int. Conf. on Smart Grid Commun.}, Gaithersburg, MD, USA,
  Oct. 2010, pp. 232--237.

\bibitem{Kalogridis:2011}
G.~Kalogridis, Z.~Fan, and S.~Basutkar, ``Affordable privacy for home smart
  meters,'' in \emph{Proc. IEEE Int. Symp. on Parallel and Distributed Process.
  with Applicat. Workshops}, Busan, Korea, May 2011, pp. 77--84.

\bibitem{Yang:2012}
W.~Yang, N.~Li, Y.~Qi, W.~Qardaji, S.~McLaughlin, and P.~McDaniel, ``Minimizing
  private data disclosures in the smart grid,'' in \emph{Proc. ACM Conf. on
  Comput. and Commun. Security}, Raleigh, NC, USA, Oct. 2012, pp. 415--427.

\bibitem{Varodayan:2011}
D.~Varodayan and A.~Khisti, ``Smart meter privacy using a rechargeable battery:
  Minimizing the rate of information leakage,'' in \emph{Proc. IEEE Int. Conf.
  on Acoust., Speech and Signal Process.}, Prague, Czech Republic, May 2011,
  pp. 1932--1935.

\bibitem{Tan:2013JSAC}
O.~Tan, D.~{G{\"u}nd{\"u}z}, and H.~V. Poor, ``Increasing smart meter privacy
  through energy harvesting and storage devices,'' \emph{{IEEE} J. Sel. Areas
  Commun.}, vol.~31, no.~7, pp. 1331--1341, Jul. 2013.

\bibitem{Tan:2017JIFS}
O.~Tan, J.~G\'{o}mez-Vilardeb\'{o}, and D.~G{\"u}nd{\"u}z, ``Privacy-cost
  trade-offs in demand-side management with storage,'' \emph{{IEEE} Trans. Inf.
  Forens. Security}, vol.~12, no.~6, pp. 1458--1469, June 2017.

\bibitem{Li:2016Arxiv}
S.~Li, A.~Khisti, and A.~Mahajan, ``Privacy-optimal strategies for smart
  metering systems with a rechargeable battery,'' \emph{arXiv:1510.07170
  [cs.IT]}, Aug. 2016.

\bibitem{Farokhi:2017TSM}
F.~Farokhi and H.~Sandberg, ``Fisher information as a measure of privacy:
  Preserving privacy of households with smart meters using batteries,''
  \emph{IEEE Trans. on Smart Grid}, Feb. 2017.

\bibitem{Gunduz:2013ICC}
D.~{G{\"u}nd{\"u}z} and J.~G\'{o}mez-Vilardeb\'{o}, ``Smart meter privacy in
  the presence of an alternative energy source,'' in \emph{Proc. IEEE Int.
  Conf. on Commun.}, Budapest, Hungary, Jun. 2013, pp. 2027--2031.

\bibitem{Gomez:2013ISIT}
J.~G\'{o}mez-Vilardeb\'{o} and D.~{G{\"u}nd{\"u}z}, ``Privacy of smart meter
  systems with an alternative energy source,'' in \emph{Proc. IEEE Int. Symp.
  on Inform. Theory}, Istanbul, Turkey, Jul. 2013, pp. 2572--2576.

\bibitem{Gomez:2015TIFS}
J.~{G\'{o}mez-Vilardeb\'{o}} and D.~{G{\"u}nd{\"u}z}, ``Smart meter privacy for
  multiple users in the presence of an alternative energy source,''
  \emph{{IEEE} Trans. Inf. Forens. Security}, vol.~10, no.~1, pp. 132--141,
  Jan. 2015.

\bibitem{Chin:2016TSG}
J.-X. Chin, T.~T.~D. Rubira, and G.~Hug, ``Privacy-protecting energy management
  unit through model-distribution predictive control,'' \emph{accepted to IEEE
  Trans. on Smart Grid}, May 2017.

\bibitem{Chatzikokolakis:2008}
K.~Chatzikokolakis, C.~Palamidessi, and P.~Panangaden, ``Anonymity protocols as
  noisy channels,'' \emph{Information and Computation}, vol. 206, no.~2, pp.
  378 -- 401, 2008.

\bibitem{Kopf:2007}
B.~K\"{o}pf and D.~Basin, ``An information-theoretic model for adaptive
  side-channel attacks,'' in \emph{Proc. ACM Conf. on Computer and Commun.
  Security}, Alexandria, VA, USA, Oct. 2007, pp. 286--296.

\bibitem{Clark:2002}
D.~Clark, S.~Hunt, and P.~Malacaria, ``Quantitative analysis of the leakage of
  confidential data,'' \emph{Electronic Notes in Theoretical Computer Science},
  vol.~59, no.~3, pp. 238 -- 251, 2002.

\bibitem{Sunverge}
\BIBentryALTinterwordspacing
Sunverge. [Online]. Available: \url{http://www.sunverge.com/energy-management/}
\BIBentrySTDinterwordspacing

\bibitem{sonnen}
\BIBentryALTinterwordspacing
{SonnenBatterie}. [Online]. Available:
  \url{https://sonnen-batterie.com/en-us/sonnenbatterie}
\BIBentrySTDinterwordspacing

\bibitem{tesla}
\BIBentryALTinterwordspacing
Tesla {Powerwall} 2. [Online]. Available: \url{https://www.tesla.com/powerwall}
\BIBentrySTDinterwordspacing

\bibitem{LGresu}
\BIBentryALTinterwordspacing
{LG Chem ESS battery}. [Online]. Available:
  \url{http://www.lgchem.com/global/ess/ess/product-detail-PDEC0001}
\BIBentrySTDinterwordspacing

\bibitem{panasonic}
``Panasonic residential storage battery system {LJ-SK84A},''
  http://www.panasonic.com/au/support/product-archives/energy-solutions/residential-storage-battery-system/lj-sk84a.html.

\bibitem{powervault}
\BIBentryALTinterwordspacing
Powervault. [Online]. Available:
  \url{http://www.powervault.co.uk/technical/technical-specifications/}
\BIBentrySTDinterwordspacing

\bibitem{orison}
\BIBentryALTinterwordspacing
Orison battery. [Online]. Available: \url{http://orison.energy/}
\BIBentrySTDinterwordspacing

\bibitem{simpliphi}
\BIBentryALTinterwordspacing
{Simpliphi Power PHI3.4 Smart-Tech Battery}. [Online]. Available:
  \url{http://simpliphipower.com/product/phi3-4-smart-tech-battery/}
\BIBentrySTDinterwordspacing

\bibitem{pecanstreet}
\BIBentryALTinterwordspacing
{Pecan Street Inc. Dataport 2016}. [Online]. Available:
  \url{https://dataport.cloud/}
\BIBentrySTDinterwordspacing

\bibitem{intertek}
\BIBentryALTinterwordspacing
{Intertek Testing \& Certification Ltd}. Household electricity survey, a study
  of domestic electrical product usage. [Online]. Available:
  \url{http://randd.defra.gov.uk/Document.aspx?Document=10043_R66141HouseholdElectricitySurveyFinalReportissue4.pdf}
\BIBentrySTDinterwordspacing

\bibitem{dred}
\BIBentryALTinterwordspacing
S.~Akshay. {DRED (Dutch Residential Energy Dataset)}. [Online]. Available:
  \url{http://www.st.ewi.tudelft.nl/\%7Eakshay/dred/}
\BIBentrySTDinterwordspacing

\bibitem{Lichman:2013}
\BIBentryALTinterwordspacing
M.~Lichman, ``{UCI} machine learning repository,'' 2013. [Online]. Available:
  \url{http://archive.ics.uci.edu/ml}
\BIBentrySTDinterwordspacing

\bibitem{microgen}
\BIBentryALTinterwordspacing
{The University of Sheffield - Sheffield Solar}. Microgen database. [Online].
  Available: \url{http://www.microgen-database.org.uk/}
\BIBentrySTDinterwordspacing

\bibitem{Blahut:1972}
R.~Blahut, ``Computation of channel capacity and rate-distortion functions,''
  \emph{{IEEE} Trans. Inf. Theory}, vol.~18, no.~4, pp. 460--473, Jul. 1972.

\bibitem{Arnold:2006}
D.-M. Arnold, H.-A. Loeliger, P.~Vontobel, A.~Kavcic, and W.~Zeng,
  ``Simulation-based computation of information rates for channels with
  memory,'' \emph{{IEEE} Trans. Inf. Theory}, vol.~52, no.~8, pp. 3498--3508,
  Aug 2006.

\bibitem{Deisenroth:2013}
M.~P. Deisenroth, G.~Neumann, and J.~Peters, ``A survey on policy search for
  robotics,'' \emph{Foundations and Trends in Robotics}, vol.~2, pp. 1--142,
  Aug. 2013.

\bibitem{cvx}
\BIBentryALTinterwordspacing
M.~Grant and S.~Boyd, ``{CVX}: Matlab software for disciplined convex
  programming, version 2.1,'' Mar. 2014. [Online]. Available:
  \url{http://cvxr.com/cvx}
\BIBentrySTDinterwordspacing

\bibitem{Giaconi:2016}
G.~Giaconi and D.~{G{\"u}nd{\"u}z}, ``Smart meter privacy with renewable energy
  and a finite capacity battery,'' in \emph{Proc. IEEE Int. Workshop on Signal
  Process. Advances in Wireless Commun.}, Edinburgh, UK, Jul. 2016.

\bibitem{Cover:1991}
T.~M. Cover and J.~A. Thomas, \emph{Elements of Information Theory}.\hskip 1em
  plus 0.5em minus 0.4em\relax New York, NY, USA: Wiley-Interscience, 1991.

\bibitem{Berger:1971}
T.~Berger, \emph{Rate Distortion Theory: A Mathematical Basis for Data
  Compression}.\hskip 1em plus 0.5em minus 0.4em\relax Englewood Cliffs, NJ:
  Prentice-Hall, 1971.

\bibitem{Gallager:1996}
R.~G. Gallager, \emph{Discrete Stochastic Processes}.\hskip 1em plus 0.5em
  minus 0.4em\relax Springer US, 1996.

\end{thebibliography}

\begin{IEEEbiography}[{\includegraphics[width=1in,height=1.25in,clip,keepaspectratio]{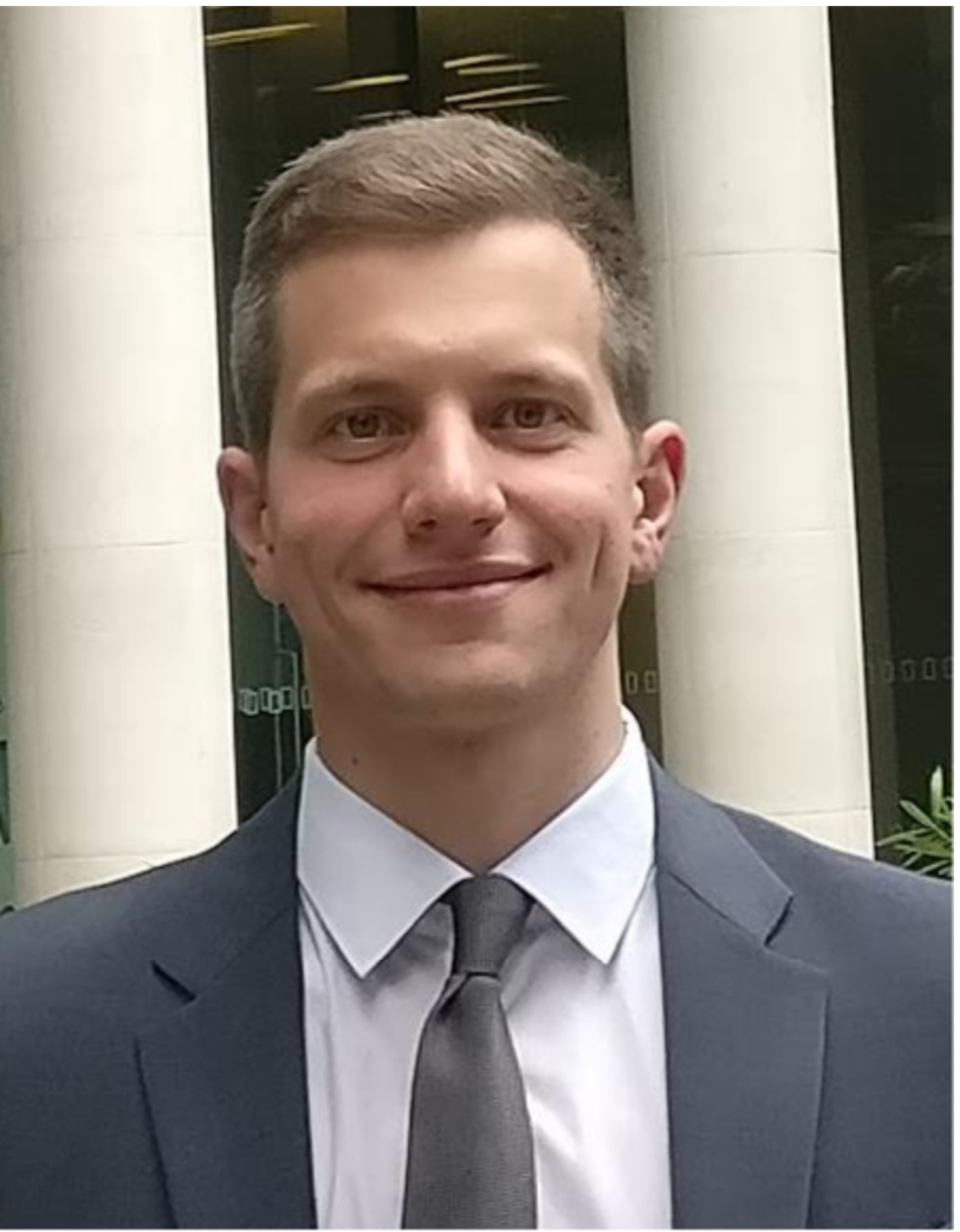}}]
{Giulio Giaconi} (S'15) received the B.Sc. and M.Sc. degrees (Hons.) in communication engineering from the Sapienza University of Rome, Italy, in 2011 and 2013, respectively. He is currently pursuing the Ph.D. degree with the Department of Electrical and Electronic Engineering, Imperial College London, U.K. In 2013, he was a Visiting Student with Imperial College London, working on indoor localization via visible light communications. His current research interests include data privacy, information and communication theory, optimization, signal processing, and machine learning. In 2014, he received the Excellent Graduate Student Award of the Sapienza University of Rome. 
\end{IEEEbiography}

\begin{IEEEbiography}[{\includegraphics[width=1in,height=1.25in,clip,keepaspectratio]{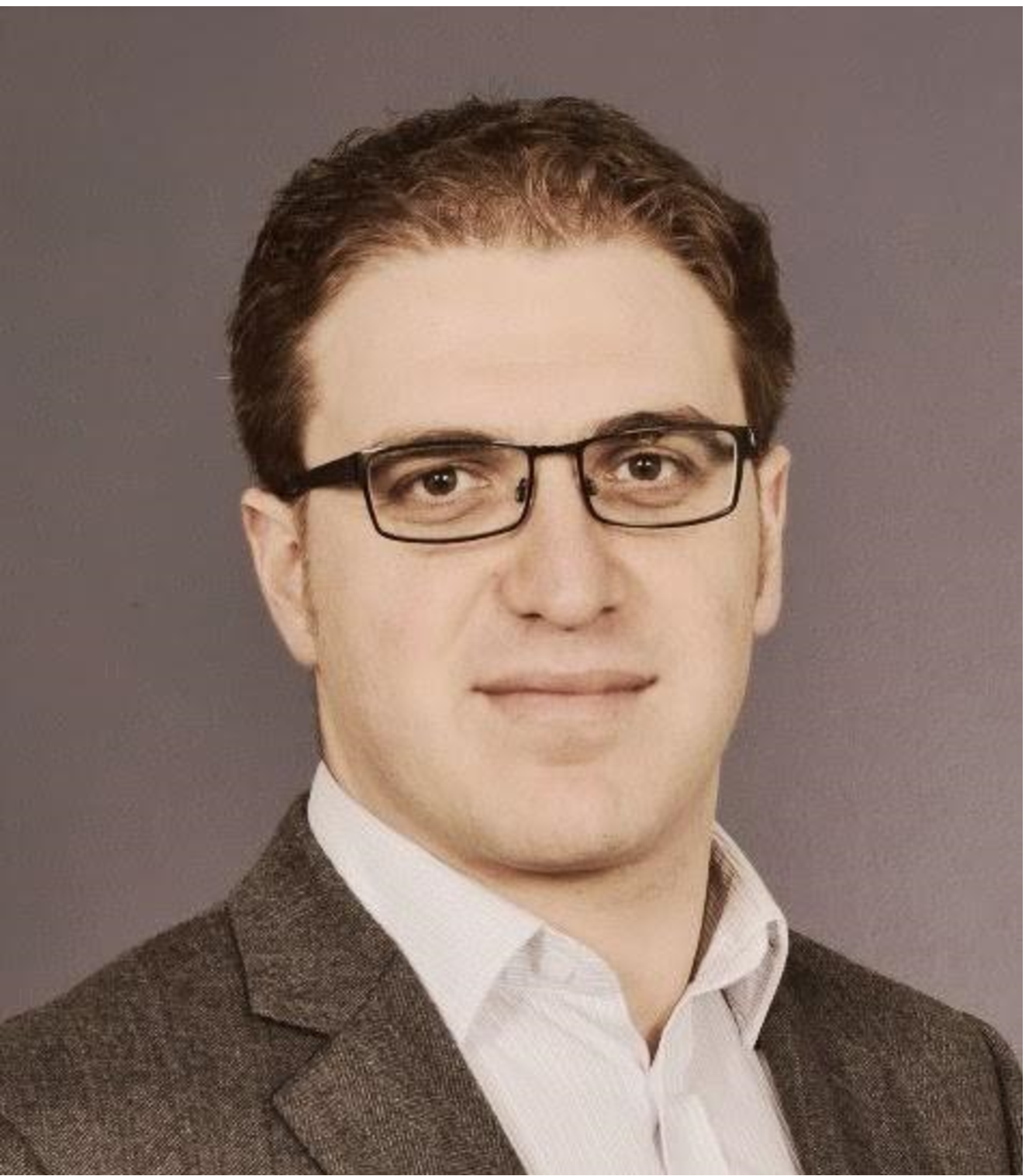}}]%
{Deniz G\"{u}nd\"{u}z} (S'03-M'08-SM'13) received the B.S. degree in electrical and electronics engineering from METU, Turkey in 2002, and the M.S. and Ph.D. degrees in electrical engineering from the NYU Polytechnic School of Engineering in 2004 and 2007, respectively. He served as a Post-Doctoral Research Associate with Princeton University, and a Consulting Assistant Professor with Stanford University. He was a Research Associate with CTTC, Barcelona, Spain, until 2012, when he joined the Electrical and Electronic Engineering Department, Imperial College London, U.K., where he is currently a Reader in information theory and communications.
 
His research interests include the areas of communications and information theory with special emphasis on multi-user communication networks, multimedia content delivery, energy efficient communications and privacy in cyber-physical systems. He was a recipient of a Starting Grant of the European Research Council in 2016, the IEEE Communications Society Best Young Researcher Award for the Europe, Middle East, and Africa Region in 2014, the Best Paper Award at the 2016 IEEE Wireless Communications and Networking Conference, and the Best Student Paper Award at the 2007 IEEE International Symposium on Information Theory. He is the General Co-chair of the 2018 Workshop on Smart Antennas. He served as the General Co-chair of the 2016 IEEE Information Theory Workshop, and a Co-chair of the PHY and Fundamentals Track of the 2017 IEEE Wireless Communications and Networking Conference. 
He is an Editor of the IEEE TRANSACTIONS ON COMMUNICATIONS, and the IEEE TRANSACTIONS ON GREEN COMMUNICATIONS AND NETWORKING. 

\end{IEEEbiography}

\begin{IEEEbiography}[{\includegraphics[width=1in,height=1.25in,clip,keepaspectratio]{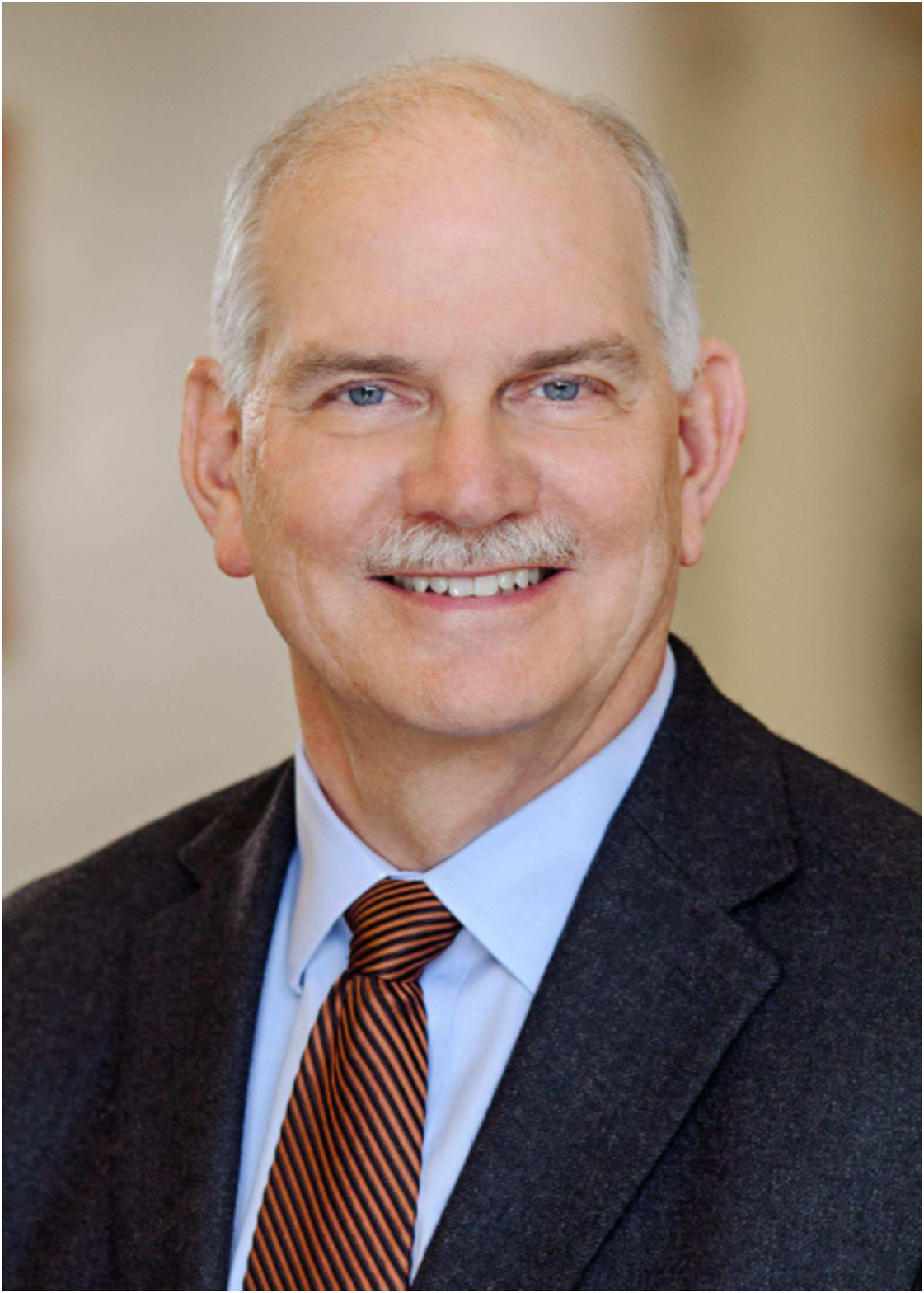}}]%
{H. Vincent Poor} (S'72, M'77, SM'82, F'87) received the Ph.D. degree in EECS from Princeton University in 1977.  From 1977 to 1990, he was with the Faculty of the University of Illinois at Urbana-Champaign. Since 1990 he has been with the Faculty, Princeton, where he is the Michael Henry Strater University Professor of Electrical Engineering. From 2006 to 2016, he served as Dean of Princeton's School of Engineering and Applied Science. He has also held visiting appointments at several other institutions, most recently at Berkeley and Cambridge. His research interests are in the areas of information theory and signal processing, and their applications in wireless networks and related fields. Among his publications in these areas is the recent book \textit{Information Theoretic Security and Privacy of Information Systems} (Cambridge University Press, 2017).

Dr. Poor is a member of the National Academy of Engineering and the National Academy of Sciences, and a foreign member of the Royal Society. In 1990, he served as the President of the IEEE Information Theory Society, and from 2004 to 2007, as an Editor-in-Chief of the IEEE TRANSACTIONS ON INFORMATION THEORY. He received the Technical Achievement and Society Awards of the IEEE Signal Processing Society in 2007 and 2011, respectively. Recent recognition of his work includes the 2017 IEEE Alexander Graham Bell Medal, a D.Sc. \textit{honoris causa} from Syracuse University received in 2017, and election as a Foreign Member of the National Academy of Engineering of Korea in 2017, and an Honorary Member of the National Academy of Sciences of Korea, in 2017.

\end{IEEEbiography}
\end{document}